\documentclass[times,twocolumn,final]{elsarticle}

\usepackage{cag}
\usepackage{framed,multirow}

\usepackage{amssymb}
\usepackage{latexsym}

\usepackage{url}
\usepackage{verbatim}
\usepackage{times}
\usepackage{epsfig}
\usepackage{graphicx}
\usepackage{amsmath,amsfonts,amssymb,amsthm,version}
\usepackage{algorithmic}
\usepackage{algorithm}
\usepackage{appendix}
\usepackage{array}
\usepackage{tikz-cd}
\usepackage[caption=false,font=normalsize,labelfont=sf,textfont=sf]{subfig}

\newtheorem{theorem}{Theorem}
\newtheorem{lemma}{Lemma}

\newtheorem{property}{Property}
\newtheorem{definition}{Definition}
\newtheorem{example}{Example}

\theoremstyle{remark}
\newtheorem{remark}{Remark}

\usepackage{xcolor}
\definecolor{newcolor}{rgb}{.8,.349,.1}

\usepackage{hyperref}

\usepackage[switch,pagewise]{lineno} 

\journal{Computers \& Graphics}

\begin{document}

\verso{Preprint Submitted for review}

\begin{frontmatter}

\title{Analyzing Singular Patterns in Discrete Planar Vector Fields via Persistent Path Homology}%

\author[1]{Yu \snm{Chen}}
\ead{chenyu.math@zju.edu.cn}
    
\author[1]{Hongwei \snm{Lin} \corref{cor1}}
\cortext[cor1]{Corresponding author}
\emailauthor{hwlin@zju.edu.cn}{Hongwei Lin}

\address[1]{School of Mathematics Science, Zhejiang University, Hangzhou, 310058, China}

    



\begin{abstract}
Analyzing singular patterns in vector fields is a fundamental problem in theoretical and practical domains due to the ability of such patterns to detect the intrinsic characteristics of vector fields. In this study, we propose an approach for analyzing singular patterns from discrete planar vector fields. Our method involves converting the planar discrete vector field into a specialized digraph and computing its one-dimensional persistent path homology. By analyzing the persistence diagram, we can determine the location of singularities, and the variations of singular patterns can also be analyzed. The experimental results demonstrate the effectiveness of our method in analyzing the singular patterns of noisy real-world vector fields and measuring the variations between different vector fields.
\end{abstract}

\begin{keyword}
\KWD Vector field\sep Singular pattern\sep Digraph\sep Persistent path homology \sep Computational topology
\end{keyword}

\end{frontmatter}


\section{Introduction}\label{sec:1}
Analyzing singular patterns in vector fields is an important problem in theoretical and practical domains. Singular patterns can detect the intrinsic characteristics of vector fields, such as divergence (convergence) patterns corresponding to sources (sinks) and rotational patterns corresponding to center points or spiral sources (sinks). They are commonly found in various vector field analysis problems, including vortex features of electromagnetic fields, vortex features generated by water currents, and wind field features of tropical cyclones. Identifying and extracting these singular patterns from vector fields is crucial to theoretical understanding and practical applications \cite{wong2009identifying,li2006segmentation}.
In practical scenarios, the explicit expression of the vector field is frequently unknown, and only the field vectors at certain uniformly sampled grid points are accessible. In such instances, these vector fields are considered discrete. 

While there exists a long history of research on the analysis of singular patterns, classical numerical methods may encounter accuracy issues and lack the capability to compare differences between singular patterns. In this study, we propose a new method to determine the position of singularities. Moreover, we can analyze the variations of singular patterns of a time-varying vector field. Given that discrete vector fields are similar to grid-shaped digraphs, we convert the grid-sampled vector field $X$ into a specialized digraph and analyze its features using persistent path homology. Path homology \cite{grigor2012homologies}, also known as GLMY homology, is a powerful tool for analyzing digraphs because it is sensitive to the directions in digraphs. Moreover, persistent path homology is a new perspective within the field of computational topology; it is specifically designed for analyzing digraph models. If a digraph contains intrinsic features conveyed by the direction and weight of its edges, these features can be discovered using (persistent) path homology.

Specifically, our method involves the following steps: First, we create a special digraph called an angle-based grid digraph from the vector field $X$ and construct the corresponding digraph filtration. Then, the one-dimensional persistent path homology is computed to obtain the persistence diagram. Finally, the position of singularities can be determined based on the information provided by the persistence diagram. Additionally, suppose a time-varying vector field is given. In that case, we can also measure the changes in the topological features of the vector field by computing the distance between persistence diagrams. 
In our experiments, the proposed method is employed to analyze geomagnetic fields and tropical cyclones by converting their real-world planar vector field data into a digraph and subsequently extracting the corresponding singular patterns using persistent path homology.

In summary, the main contributions of our study are as follows:
\begin{itemize}
    \item We develop a method to convert discrete vector fields into digraphs while preserving the information of the singular patterns.
    \item The positions of the singularities are determined by computing the one-dimensional persistent path homology of the digraph filtration.
    \item For a time-varying vector field, the variations can be measured by computing the distance between the corresponding persistence diagrams.
\end{itemize}

The remainder of this paper is organized as follows: We provide an overview of persistent path homology and discrete vector field analysis in Section \ref{sec:2}. Then, we present a brief preliminary of vector field, digraph, and path homology in Section \ref{sec:3}. The proposed method is introduced in Section \ref{sec:4} and Section \ref{sec:5}. In Section \ref{sec:6} we show our experiment results and provide a discussion. Section \ref{sec:7} presents the conclusions and future works.

\section{Related work}\label{sec:2}
In this section, we provide an overview of persistent path homology and discrete vector field analysis. We first introduce persistent path homology and then discuss various methods for analyzing discrete vector fields, including the detection of singularities and the segmentation of singular patterns. Additionally, a brief introduction to comparing vector fields is provided.

\subsection{Persistent path homology with applications}
Persistent homology is a powerful tool in topological data analysis for finding topological structures of data during filtration \cite{edelsbrunner2008persistent,zomorodian2004computing,edelsbrunner2002topological}. Path homology, proposed by Grigor’yan et al. \cite{grigor2012homologies}, further extends the concept of persistent homology. It has been developed into various theories, such as homotopy theory for digraphs \cite{grigor2014homotopy}, discrete Morse theory on digraphs \cite{lin2021discrete}, and path homology theory of multigraphs and quivers \cite{grigor2018path}. Chowdhury and Mémoli \cite{chowdhury2018persistent} introduced the concept of persistent path homology of networks, which provides a new method for handling digraph models in the category of topological data analysis. Dey et al. \cite{dey2022efficient} proposed an efficient algorithm for computing one-dimensional persistent path homology of digraph filtrations. 

Path homology has found successful application in various practical problems. Chowdhury et al. \cite{chowdhury2019path} utilized path homology to investigate the structure of deep feedforward networks. Chen et al. \cite{chen2023path} revealed intrinsic mathematical characteristics of molecules and materials by constructing directed networks and elucidated corresponding functional structures using persistent path homology. Liu et al. \cite{liu2023neighborhood} proposed a structural characterization method based on path homology theory to extract structural information from materials and predict their properties. Wu et al. \cite{wu2023metabolomic} employed path homology theory to analyze and interpret the topological changes in health states from symbiosis to dysbiosis and vice versa, identifying several hub metabolites and their interaction webs. In our study, we utilize one-dimensional path homology as the main tool to detect the singular patterns in the vector fields.

\subsection{Discrete vector field analysis}
The study of discrete vector field topology has a long history, with numerous techniques available for analyzing vector field topology. Singularities and the flow curves or surfaces that connect them are classical tools for visualizing the topology of a vector field \cite{helman1991visualizing}. Additionally, various methods exist for addressing different types of singularities in vector fields and analyzing their impact \cite{post2003state,laramee2007topology,salzbrunn2008state,pobitzer2011state}, providing different ways for visualizing vector fields. Moreover, vortices, which represent more general and physics-related features in vector fields and commonly appear in practical scenarios, can also be identified using a wide range of techniques designed for vector fields \cite{gunther2018state}.

We now discuss methods for detecting singularities in discrete vector fields. If the expression of the vector field is known, the position of the singularity can be found by solving the system of equations \cite{andronov1974qualitative}. However, this method is not applicable to discrete vector fields. Therefore, methods specifically designed for the discrete case have been proposed. First, there are some numerical methods such as triangular linear interpolation of vectors \cite{tricoche2002topology} and the computation of the Jacobian matrix of the discrete vector field \cite{helman1989representation}. Another useful technique is Hodge decomposition \cite{polthier2003identifying,tong2003discrete}, which decomposes vector fields into three intuitive components: a divergence-free part, a curl-free part, and a harmonic part. Singularities can be detected using the decomposed vectors. Moreover, Wong and Yip \cite{wong2009identifying} introduced a method for finding the centers of circulating and spiraling vector field patterns by fitting a logarithmic spiral equation and modeling the field vectors for singular patterns corresponding to center points or spiral sources (sinks). 
Crane et al. \cite{crane2010trivial} proposed a connection defined on cells that can be used to design vector fields, and its underlying ideas can also be applied to find the locations of singularities. This method assigns vectors to each cell, and the singularities are considered to be located at the vertices of some cells. It is also possible to track singularities of time-varying vector fields.
But our study focuses on detecting the accurate position of singularities within the cells. Therefore, the main purpose in \cite{crane2010trivial} is different from our target.
In contrast, unlike certain numerical methods that may fail to find singularities or encounter spurious singularities caused by precision issues, thus demonstrating their limitations, this study proposes a method based on persistent path homology and an angle-based grid digraph for detecting singular patterns, relying on the singularity index.

Another issue arises in comparing vector fields from a topological perspective and measuring their differences, particularly investigating the variations of singular patterns in time-varying vector fields. Theisel et al. \cite{Theisel2003UsingFF} introduced a topology-based comparison method that identifies changes in the singularities present in vector fields, serving as the primary method currently employed for comparing vector fields. Moreover, other topological features, such as closed orbits and separatrices, are also utilized in comparing vector fields at different moments in time-varying planar vector fields \cite{tricoche2002topology}. But these methods are hard to measure global variations of singular patterns between vector fields. In our study, we utilize the distance between persistence diagrams to measure the topological variations of vector fields, particularly time-varying vector fields.

\section{Preliminaries}\label{sec:3}
\subsection{Concepts of vector fields}
In the remainder of this paper, we focus on planar vector fields. We introduce some basic concepts about planar vector fields, which exhibit important topological properties. References on vector field properties can be found in \cite{andronov1974qualitative}.

\begin{definition}(Singularity)
    Given a planar vector field $ X=(P(x,y),Q(x,y)) $ defined on $D \subseteq \mathbb{R}^2$, then a \textit{singularity} $(x_0,y_0)\in D$ of $ X $ is a point such that $P(x_0,y_0)=Q(x_0,y_0)=0$.
\end{definition}
Singularity is an important concept of vector fields because it can determine the topologies of vector fields. From the perspective of the flow field or dynamical system, we can classify singularities into different types, such as \textit{saddle}, \textit{source}, \textit{sink}, \textit{center point}, \textit{spiral source}, and \textit{spiral sink} \cite{andronov1974qualitative}.

Consider a simple closed curve $L$ that does not contain any singularities of the vector field $X=(P,Q)$. The \textit{winding number} of $X$ with respect to $L$ is defined as
$$
j = \frac{1}{2\pi} \oint_L d\arctan\left(\frac{Q}{P}\right) = \frac{1}{2\pi} \oint_L \frac{P\,dQ - Q\,dP}{P^2+Q^2},
$$
where the integration along $L$ is performed counterclockwise. 
When the region bounded by $L$ contains exactly one singularity $S(x_0,y_0)$ of $X$, this winding number is called the \textit{index} of $S$. 
For a continuous planar vector field, the winding number (or index) satisfies several fundamental properties: (1)For any simple closed curve, the index equals the sum of indices of all enclosed singularities.
(2) A simple closed curve enclosing no singularities has index $0$.

In this study, we focus on the vector fields that have no more than one singularity in each square (which will be introduced later), and the index of each singularity is either 1 or -1 only.

\subsection{Digraph and its path homology}
\begin{definition}(Digraph)
    A \textit{digraph} is an ordered pair $G =(V,E)$, where $ V $ is a set of all vertices and $ E $ is a set of ordered pairs of vertices, i.e. directed edges that satisfy $E \subseteq V\times V$. If $G =(V,E)$ does not contain any self-loop, it is called a \textit{simple digraph}.
\end{definition}
For example, the digraph
$$\begin{tikzcd}
a \arrow[r] \arrow[r] \arrow[r] & b \arrow[r] & c \arrow[loop, distance=2em, in=305, out=235]
\end{tikzcd}
$$
is not a simple digraph, since it has a self-loop at node $c$.

Now we introduce the path homology of digraphs. The theories of path homology of general path complexes can be found in previous works \cite{grigor2012homologies}. 
\begin{definition}(Elementary $ n $-path and boundary operator)
    Let $V$ be an arbitrary nonempty finite set whose elements are called vertices. For $n \ge 0$, an \textit{elementary $n$-path} $i_0 \ldots i_n$ on $V$ is a sequence of $n+1$ vertices in $V$. For $n=-1$ the set of elementary $n$-path is empty set $\emptyset$. An elementary $n$-path $i_0 \ldots i_n$ is also denoted by $e_{i_0 \ldots i_n}$. The boundary operator is defined as
    $$ \partial e_{i_0 \ldots i_n} = \sum_{k=0}^{n} (-1)^k e_{i_0 \ldots \hat{i}_k \ldots i_n}$$
    where the hat $\hat{i}_k$ means omission of the index $i_k$.
\end{definition}
For example, for the combination of two path $e_{023}-e_{013}$, we have
$$\begin{aligned}
    \partial(e_{023}-e_{013} )&=e_{23}-e_{03}+e_{02}-(e_{13}-e_{03}+e_{01})\\
        &=e_{23}+e_{02}-e_{13}-e_{01}.
\end{aligned}$$

An elementary path $e_{i_0 \ldots i_n}$ on a set $V$ is \textit{regular} if $i_{k-1} \neq i_k, \forall k=1, \ldots, n$. Otherwise it is \textit{non-regular}. Thus, using the vertex set of a simple digraph, we can obtain the set of elementary $n$-paths of $G$, where all elements are regular paths. In the following discussion, unless otherwise noted, simple digraphs are considered.

\begin{definition}(Allowed $ n $-paths)
    Let $ G=(V,E) $ be a simple digraph, for any $n \ge 0$, the $\mathbb{K}$-linear space $\mathcal{A}_n(G)=\mathcal{A}_n(V,E;\mathbb{K})$ is defined as
    $$\mathcal{A}_n=\mathcal{A}_n(G)=\operatorname{span}\left\{e_{i_0 \ldots i_n}: i_0, \ldots, i_n \in V, i_k i_{k+1}\in E \right\} $$
    where $k=0,1,\cdots,n-1$. The elements of $\mathcal{A}_n$ are called \textit{allowed $ n $-paths}.
\end{definition}

\begin{definition}($\partial$-invariant $ n $-paths)
    Let $ G=(V,E) $ be a simple digraph, for any $n\ge -1$, we define a subspace of $\mathcal{A}_n$ as:
    $$
    \Omega_n=\Omega_n(G)=\left\{v \in \mathcal{A}_n: \partial v \in \mathcal{A}_{n-1}\right\}
    $$
    Here $\partial$ is the boundary operator, and we define $\Omega_{-1}=\mathcal{A}_{-1} \cong \mathbb{K}$ and $\Omega_{-2}=\mathcal{A}_{-2}=\{0\}$. The elements of $\Omega_n$ are called \textit{$\partial$-invariant $ n $-paths}.
\end{definition}

\begin{definition}(Path homology groups of digraph)
    Let $G =(V,E)$ be a digraph, and we have the following chain complex of $V$ denoted as $\Omega_*(V)=\left\{\Omega_n\right\}$,
    $$
    \cdots \stackrel{\partial}{\longrightarrow} \Omega_3 \stackrel{\partial}{\longrightarrow} \Omega_2 \stackrel{\partial}{\longrightarrow} \Omega_1 \stackrel{\partial}{\longrightarrow} \Omega_0 \stackrel{\partial}{\longrightarrow} \mathbb{K} \stackrel{\partial}{\longrightarrow} 0,
    $$
    and the associated \textit{$ n $-dimensional path homology group} of $G=(V, E)$ is defined as:
    \begin{equation*}
        H_n(G)=H_n(V, E ; \mathbb{K}):=\operatorname{Ker}\left(\left.\partial\right|_{\Omega_n}\right) / \operatorname{Im}\left(\left.\partial\right|_{\Omega_{n+1}}\right) 
    \end{equation*}
\end{definition}
The elements of $Z_n:=\operatorname{Ker}\left(\left.\partial\right|_{\Omega_n}\right)$ are called \textit{$ n $-cycles}, and the elements of $B_n:=\operatorname{Im}\left(\left.\partial\right|_{\Omega_{n+1}}\right)$ are called \textit{$ n $-boundaries}. A \textit{representation} of a generator of $H_n(G)$ is an $n$-cycle in $G$ but not an $n$-boundary. Usually it is not unique.

\begin{example}
    Consider the following digraph: 
    $$\begin{tikzcd}
        a                       & c \arrow[l]                     & f \arrow[l] \\
        b \arrow[u, shift left] & d \arrow[l] \arrow[u] \arrow[r] & g \arrow[u]
    \end{tikzcd}$$
    It is a simple digraph, and there are five allowed 2-paths $e_{dba}$, $e_{dca}$, $e_{dgf}$, $e_{fca}$, $e_{gfc}$
    that generate $\mathcal{A}_2$. But there is only one $\partial$-invariant 2-path $e_{dba}-e_{dca}$ that generates $\Omega_2$, since 
    $$\begin{aligned}
        \partial(e_{dba}-e_{dca} )&=e_{ba}-e_{da}+e_{db}-(e_{ca}-e_{da}+e_{dc})\\
        &=e_{ba}+e_{db}-e_{ca}-e_{dc} \in \mathcal{A}_1.
    \end{aligned}
    $$
    And other 2-paths are not elements in $\Omega_2$. Now consider cycles $C_1=e_{db}+e_{ba}-e_{ca}-e_{dc}$ and $C_2=e_{dg}+e_{gf}+e_{fc}-e_{dc}$, then $C_1, C_2$ are cycles that generate $Z_1$. But only $C_1$ is a 1-boundary which generates $B_1$, since $C_1=\partial(e_{dba}-e_{dca})$. Hence, $H_1$ is generated by the class homologous to $C_2$ and $\dim H_1=1$.
\end{example}

Then we introduce the persistent path homology, which can be derived from classical persistent homology theory.
\begin{definition} \label{def:PPH}(One-dimensional persistent path homology \cite{dey2022efficient})
    Let $G=(V,E,w)$ be a weighted digraph where $ V $ is the vertex set, $ E $ is the edge set, and $ w $ is the weight function $w: E \to \mathbb{R}_+$. Denote $G^{\delta}=(V^{\delta}=V,E^{\delta}=\left\lbrace e\in E: w(e) \le \delta \right\rbrace )$, then we derive a \textit{digraph filtration} $ \left\lbrace G^{\delta} \hookrightarrow  G^{\delta^{\prime}} \right\rbrace_ {\delta \le \delta^{\prime} \in \mathbb{R}_+} $ created by $G$. The \textit{one-dimensional persistent path homology} of a weighted digraph $G=(V,E,w)$ is defined as the persistent vector space 
    $$\mathbb{H}_1:=\left\lbrace H_1(G^{\delta}) \xrightarrow[]{i_{\delta,\delta^{\prime}}}  H_1(G^{\delta^{\prime}}) \right\rbrace_ {\delta \le \delta^{\prime} \in \mathbb{R}_+}$$
    where $i_{\delta,\delta^{\prime}}:G^{\delta} \hookrightarrow  G^{\delta^{\prime}}$ is the natural inclusion map. And $G^{\delta}$ is also called the \textit{moment} corresponding to $\delta$ in the filtration.
\end{definition}
In other words, a digraph filtration created by $G$ can be defined as a subgraph sequence that retains all vertices of $G$ and gradually adds edges of $G$ in a non-decreasing order of weights.

A valuable tool for visualizing persistent homology is the \textit{persistence diagram} (PD). The points $(b_i, d_i)$ in the persistence diagram are referred to as \textit{persistence pairs}, where $b_i$ represents the birth time of a homology generator and $d_i$ represents the death time of that homology generator. Fig. \ref{fig:PD} illustrates an example of a persistence diagram. Moreover, various distances are used to quantify the dissimilarities of two persistence diagrams, such as bottleneck distance and Wasserstein distance \cite{edelsbrunner2008persistent,cohen2005stability,chazal2008stability,cohen2010lipschitz}. These distances can measure the topological differences between the original datasets. The persistence diagram can also be utilized to describe persistent path homology. We denote the one-dimensional path persistence diagram of $\mathbb{H}_1$ as $Dgm_1(G)$.
\begin{figure}[!t]
    \centering
    \includegraphics[width=0.95\linewidth]{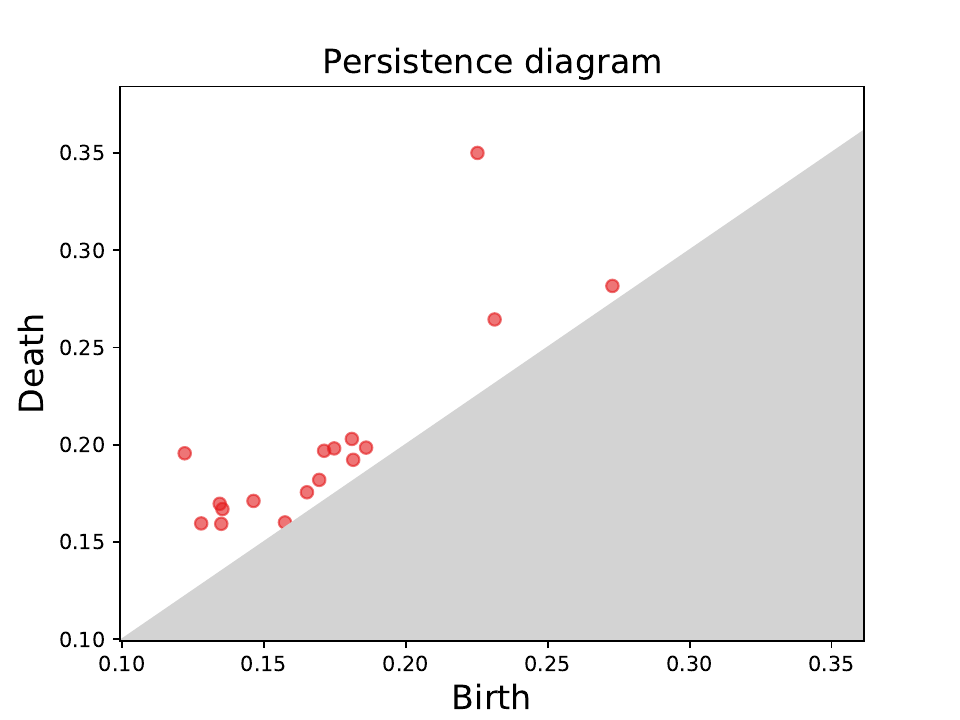}
    \caption{ Example of a persistence diagram.
    }
    \label{fig:PD}
\end{figure}

We now introduce a special kind of digraph, called the \textit{cycle digraph}, which is useful when constructing a digraph from a vector field in the subsequent sections.
\begin{definition}(Cycle digraph \cite{grigor2012homologies})
    A digraph $G = (V,E)$ is a \textit{cycle digraph} if it is connected (as an undigraph), and every vertex has the degree 2.
\end{definition}
A \textit{bigon} is a sequence of two distinct vertices $a, b \in V$ such that $a\to b, b\to a$.

A \textit{(boundary) triangle} is a sequence of three distinct vertices $a, b, c \in V$ such that $a\to b, b\to c, a\to c$:
$$\begin{tikzcd}
    & b \arrow[rd] &   \\
    a \arrow[ru] \arrow[rr] &              & c
\end{tikzcd}$$

A \textit{(boundary) square} is a sequence of four distinct vertices $a, b, c, d \in V$ such that $a\to b, b\to c, a\to d, d\to c$:
$$\begin{tikzcd}
    b \arrow[r]           & c           \\
    a \arrow[u] \arrow[r] & d \arrow[u]
\end{tikzcd}$$
A bigon, a triangle, and a square are all cycle digraphs. 

\begin{theorem}(Path homology of a cycle digraph \cite{grigor2012homologies} ) \label{thm:2}
    Let $G$ be a cycle digraph. Then, $\operatorname{dim} H_p(G)=0 \text { for all } p \geq 2$. If $G$ is a bigon or a (boundary) triangle or a (boundary) square, then $ \operatorname{dim} H_1(G)=0$; otherwise, $\operatorname{dim} H_1(G)=1$.
\end{theorem}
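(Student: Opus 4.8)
The plan is to write out the path chain complex $\Omega_*(G)$ explicitly from the combinatorics of the underlying cycle and then read off the homology; essentially everything is forced once one understands $\Omega_2$ and sees that $\Omega_p=0$ for $p\ge 3$, so that is where I would concentrate the work. First I would fix notation. Since $G=(V,E)$ is connected with every vertex of degree $2$, its underlying undirected graph is a simple cycle on $n\ge 2$ vertices $v_0,\dots,v_{n-1}$ (the case $n=2$ being the bigon), and an orientation of $G$ is just a choice, for each cycle-edge $\{v_i,v_{i+1}\}$, of one of its two directions. I would dispose of the bigon $n=2$ at once: there $\Omega_p=\mathcal A_p$ is spanned by the two alternating regular paths of length $p$ for every $p\ge 0$, and a direct computation shows the augmented path chain complex is acyclic, so $H_p=0$ for all $p$, which is consistent with the statement ($\dim H_1=0$ and $\dim H_p=0$ for $p\ge2$). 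From now on assume $n\ge 3$; then $\Omega_1=\mathcal A_1$ has dimension $n$ and $Z_1=\operatorname{Ker}(\partial|_{\Omega_1})$ is the cycle space of $C_n$, hence one-dimensional, generated by a ``fundamental cycle'' $\gamma$ obtained by traversing the cycle once and summing the edges, each with a sign according to whether the stored edge agrees with the traversal.

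The core step is the computation of $\Omega_2$. An allowed $2$-path must have the form $e_{uwv}$ where $w$ has in-degree and out-degree $1$, $u$ is its in-neighbour and $v$ its out-neighbour; then $\partial e_{uwv}=e_{uw}+e_{wv}-e_{uv}$, and the middle term $e_{uv}$ is an allowed $1$-path iff $u\to v\in E$. Since $u$ and $v$ are precisely the two cycle-neighbours of $w$, the edge $u\to v$ can occur only when $n=3$. From this I would extract three cases: (i) if $G$ is a boundary triangle $a\to b,\ b\to c,\ a\to c$, the unique allowed $2$-path $e_{abc}$ already has $\partial e_{abc}\in\mathcal A_1$, so $\Omega_2=\langle e_{abc}\rangle$; (ii) if $G$ is a boundary square $a\to b,\ b\to c,\ a\to d,\ d\to c$, the two allowed $2$-paths $e_{abc},e_{adc}$ have boundaries sharing the single forbidden term $e_{ac}$, so $\Omega_2=\langle e_{abc}-e_{adc}\rangle$; (iii) in every remaining case each allowed $2$-path $e_{uwv}$ carries a genuine forbidden term $e_{uv}$ in its boundary, and these forbidden $1$-paths are pairwise distinct, since two distinct through-vertices with the same ordered neighbour pair $(u,v)$ would force exactly the boundary-square configuration. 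Hence in case (iii) no nontrivial combination of allowed $2$-paths is $\partial$-invariant and $\Omega_2=0$. The hard part will be precisely this bookkeeping of forbidden terms, together with verifying that their only possible collision is the boundary square.

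For $p\ge 3$ I would show $\Omega_p=0$: in the boundary-triangle and boundary-square cases there is no allowed $p$-path at all (every maximal directed sub-path has length $\le 2$), and in all other cases $\Omega_2=0$, so inductively $\Omega_p=\operatorname{Ker}(\partial|_{\mathcal A_p})$; deleting an interior vertex of an allowed $p$-path (and for $p\ge 3$ there are at least two interior vertices) again produces pairwise distinct forbidden $(p-1)$-paths, forcing this kernel to vanish.

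Finally I would assemble the homology. Since $\Omega_p=0$ for $p\ge 3$ we get $H_p=0$ for $p\ge 3$; since $\Omega_3=0$ we have $H_2=\operatorname{Ker}(\partial|_{\Omega_2})$, which is $0$ because $\partial$ is injective on the at-most-one-dimensional space $\Omega_2$, as $\partial e_{abc}\ne 0$ and $\partial(e_{abc}-e_{adc})\ne 0$; thus $\dim H_p=0$ for all $p\ge 2$. For $H_1=Z_1/\operatorname{Im}(\partial|_{\Omega_2})$ with $\dim Z_1=1$: when $G$ is a boundary triangle or boundary square, the remaining check is that $\partial$ sends the generator of $\Omega_2$ onto the fundamental cycle, i.e. $\partial e_{abc}=e_{ab}+e_{bc}-e_{ac}$, respectively $\partial(e_{abc}-e_{adc})=e_{ab}+e_{bc}-e_{ad}-e_{dc}$, generates $Z_1$; then $\operatorname{Im}(\partial|_{\Omega_2})=Z_1$ and $H_1=0$. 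Together with the bigon this yields exactly the three digraphs with $\dim H_1=0$, and in every other case $\Omega_2=0$, so $\operatorname{Im}(\partial|_{\Omega_2})=0$ and $\dim H_1=\dim Z_1=1$, completing the proof.
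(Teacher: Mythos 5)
The paper does not actually prove this statement: Theorem~\ref{thm:2} is imported verbatim from \cite{GLMY}, so there is no in-paper argument to compare against. Judged on its own, your proposal is correct and is essentially the standard proof from that reference: reduce everything to an explicit computation of $\Omega_2$ (and the vanishing of $\Omega_p$ for $p\ge 3$), using the fact that $\Omega_1=\mathcal A_1$ and that $Z_1$ is the one-dimensional cycle space of the underlying $C_n$. Your case analysis of $\Omega_2$ is right: the ``missing diagonal'' $e_{uv}$ in $\partial e_{uwv}$ is allowed only in the boundary-triangle configuration, and two allowed $2$-paths share the same forbidden diagonal only in the boundary-square configuration, so in all other cases the forbidden terms are linearly independent and $\Omega_2=0$, giving $H_1=Z_1\cong\mathbb K$; in the two exceptional cases $\partial$ maps the one-dimensional $\Omega_2$ onto $Z_1$, killing $H_1$. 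The one place you rightly flag as needing real bookkeeping is the pairwise-distinctness of the forbidden $(p-1)$-paths for $p\ge 3$; this does go through, but note that it needs three subcases (same path with different deleted positions contradicts regularity of the resulting sequence; different paths with the same deleted position forces the boundary square, which is excluded; different paths with different deleted positions forces an edge between vertices at cycle-distance two), and the directed $3$-cycle deserves a separate glance since there the deleted-vertex terms are paths like $e_{aba}$, $e_{aca}$, which are regular but not allowed and are indeed pairwise distinct. With that check written out, and the routine verification that the bigon's chain complex is acyclic, the argument is complete.
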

\begin{figure}[!t]
    \centering
    \includegraphics[width=1\linewidth]{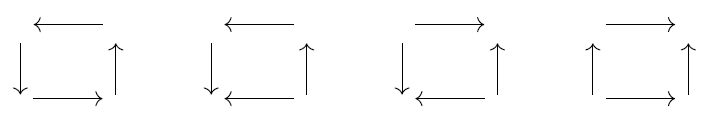}
    \caption{Four kinds of squares, where only the rightmost one has a trivial one-dimensional homology group. We call the rightmost one boundary square and call the others non-boundary squares.}
    \label{fig:squares}
\end{figure}

Consider a cycle digraph with only four vertices and four directed edges. Regardless of rotations and symmetries of the digraphs, which can be seen as having the same shape (for example, the following three digraphs: \begin{tikzcd}
    {} \arrow[r]           & {}           & {}           & {} \arrow[l]           & {} \arrow[d] & {} \arrow[l] \arrow[d] \\
    {} \arrow[u] \arrow[r] & {} \arrow[u] & {} \arrow[u] & {} \arrow[u] \arrow[l] & {}           & {} \arrow[l]          
\end{tikzcd}), we focus solely on the directions of the edges. Consequently, such digraphs only have four distinct shapes, as depicted in Fig. \ref{fig:squares}. According to Theorem \ref{thm:2}, only the rightmost shape exhibits a trivial one-dimensional homology group, leading us to refer to it as the \textit{boundary square}, whereas the remaining shapes are referred to as \textit{non-boundary squares} \cite{grigor2012homologies,dey2022efficient}. Throughout the subsequent discussion, when referring to the shape of a square (whether it is a boundary or non-boundary square), we disregard rotations and symmetries of the digraphs. In addition, when discussing a square, we consider it a digraph embedded in $\mathbb{R}^2$. Thus, we are also interested in the interior area of the square.

Now, we focus on another type of cycle digraph, namely, the \textit{polygon}, which consists of more than four vertices. Theorem \ref{thm:2} implies that all polygons possess non-trivial one-dimensional path homology groups, and the polygons themselves can serve as representations of the generators of these groups. Similar to squares, polygons are considered a digraph embedded in $\mathbb{R}^2$, whose interior area is also considered.

The one-dimensional path homology of simple digraphs is clear because of the following Theorem \ref{thm:3}. This theorem establishes the theoretical foundation for our subsequent exploration of the path homology of an angle-based grid digraph, which comprises boundary squares and non-boundary squares.
\begin{theorem}\label{thm:3} \cite{dey2022efficient}
    Let $G=(V, E)$ be a simple digraph. Let $Z_1=\operatorname{Ker}\left(\left.\partial\right|_{\Omega_1}\right), B_1=\operatorname{Im}\left(\left.\partial\right|_{\Omega_{2}}\right) $ , and let $Q$ denote the space generated by all bigons and boundary triangles and boundary squares in $G$. Then, we have $B_1=Q$. Hence the one-dimensional path homology group satisfies that $H_1=Z_1 /Q$. 
\end{theorem}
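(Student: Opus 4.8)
The plan is to reduce everything to the single non-trivial identity $B_1 = Q$; granting that, $H_1 = Z_1/B_1 = Z_1/Q$ is immediate from the definition of the one-dimensional path homology group. I would prove $B_1 = Q$ by establishing the two inclusions separately, reading a bigon, boundary triangle, or boundary square throughout as its fundamental $1$-cycle in $\Omega_1 = \mathcal{A}_1$.

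The inclusion $Q \subseteq B_1$ is the routine direction: for each of the three families of cycles generating $Q$ I would display a $\partial$-invariant $2$-path mapping onto it. For a bigon $a\to b\to a$, the regular $2$-path $e_{aba}$ is allowed, and $\partial e_{aba} = e_{ba} - e_{aa} + e_{ab} = e_{ab} + e_{ba}$ once the non-regular term $e_{aa}$ is set to $0$, so $e_{aba}\in\Omega_2$ maps to the bigon cycle. For a boundary triangle $a\to b\to c$ with $a\to c$, $\partial e_{abc} = e_{ab} + e_{bc} - e_{ac}\in\mathcal{A}_1$, so $e_{abc}\in\Omega_2$. For a boundary square $a\to b\to c$, $a\to d\to c$ with $ac\notin E$, the difference $e_{abc} - e_{adc}$ satisfies $\partial(e_{abc} - e_{adc}) = e_{ab} + e_{bc} - e_{ad} - e_{dc}\in\mathcal{A}_1$ (the $e_{ac}$ contributions cancel), so $e_{abc} - e_{adc}\in\Omega_2$ maps to the square cycle. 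Since $B_1 = \partial\Omega_2$, all three cycle types lie in $B_1$, hence $Q\subseteq B_1$.

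The reverse inclusion $B_1\subseteq Q$ rests on an explicit description of $\Omega_2$, which is the heart of the argument. I would decompose $\mathcal{A}_2 = \bigoplus_{(u,w)\in V\times V} W_{u,w}$, where $W_{u,w} = \operatorname{span}\{\, e_{uvw} : uv, vw\in E \,\}$ is spanned by the regular allowed $2$-paths with prescribed first and last vertices $u,w$ (regularity forcing $v\neq u,w$). For $\eta = \sum_v \lambda_v\, e_{uvw}\in W_{u,w}$ one has $\partial\eta = \sum_v \lambda_v(e_{uv} + e_{vw}) - \big(\sum_v \lambda_v\big)\, e_{uw}$, and every summand here except possibly the last already lies in $\mathcal{A}_1$. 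Consequently, if $u = w$ (so $e_{uw} = e_{uu}$ is non-regular, hence $0$) or $uw\in E$, then $W_{u,w}\subseteq\Omega_2$; whereas if $u\neq w$ and $uw\notin E$, then $W_{u,w}\cap\Omega_2$ is exactly the hyperplane $\{\, \sum_v\lambda_v = 0 \,\}$. Since the $\partial$-invariance constraint acts independently on each summand, $\Omega_2 = \bigoplus_{(u,w)}(W_{u,w}\cap\Omega_2)$ is spanned by the bigon $2$-paths $e_{uvu}$, the boundary-triangle $2$-paths $e_{uvw}$ with $uw\in E$, and the differences $e_{uvw} - e_{uv'w}$ with $uw\notin E$ and $v\neq v'$ — and the latter are precisely the boundary-square generators $u\to v\to w$, $u\to v'\to w$. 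Applying $\partial$ to these three families produces exactly bigon cycles, boundary-triangle cycles and boundary-square cycles, so $B_1 = \partial\Omega_2\subseteq Q$.

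Putting the inclusions together yields $B_1 = Q$, whence $H_1 = Z_1/B_1 = Z_1/Q$. The main obstacle is the structural claim about $\Omega_2$ — that it is spanned by those three explicit families of $2$-paths — while the remaining steps are bookkeeping with the boundary operator. The one point that genuinely needs care is working consistently in the \emph{regular} chain complex, so that degenerate terms such as $e_{uu}$ are identified with $0$; this is exactly what makes bigons, and not only triangles and squares, appear among the generators of $B_1$.
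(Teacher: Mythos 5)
Your proof is correct. The paper does not prove this theorem itself --- it is quoted from Dey--Li--Wang \cite{DLW} (the underlying structure result on $\Omega_2$ goes back to Grigor'yan--Lin--Muranov--Yau) --- and your argument is essentially the standard one from that source: decompose $\mathcal{A}_2$ by the (first vertex, last vertex) pair, observe that the only potentially non-allowed term of $\partial e_{uvw}$ is $-e_{uw}$ so the $\partial$-invariance condition splits across the summands, and read off that $\Omega_2$ is spanned exactly by the bigon, boundary-triangle, and boundary-square $2$-chains, whence $B_1=\partial\Omega_2=Q$. The two points that genuinely need care --- working in the regular complex so that $e_{uu}=0$ (which is what puts bigons into $B_1$), and justifying that $\Omega_2=\bigoplus_{(u,w)}(W_{u,w}\cap\Omega_2)$ because the non-allowed paths $e_{uw}$ for distinct pairs are linearly independent --- are both handled correctly.
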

Hence, the following types of digraphs are all generators of $H_1$: 
$$\begin{tikzcd}
a \arrow[rd]          &             & a \arrow[r]           & b \arrow[d]           \\
c \arrow[u]           & b \arrow[l] & c \arrow[u]           & d \arrow[l]           \\
a \arrow[r]           & b \arrow[d] & a                     & b \arrow[l] \arrow[d] \\
c \arrow[u] \arrow[r] & d           & c \arrow[u] \arrow[r] & d                    
\end{tikzcd}$$

\section{Extraction of singular patterns from vector fields}\label{sec:4}
In this section, we develop an algorithm for extracting singularities. We first introduce the angle-based grid digraph, followed by the main algorithm and the analysis of its complexity. Finally, the approach for measuring variation between vector fields is introduced.

The process of the developed method unfolds as follows: Initially, we choose a specific area and extract the corresponding vector field data. Then we construct an angle-based grid digraph, where each edge’s weight is recorded. Subsequently, a digraph filtration is constructed by initially including all vertices of $G$ and subsequently adding edges of $G$ in a non-decreasing order of weights. This process enables the computation of the one-dimensional persistence diagram, thereby enabling the determination of the singularity’s position, as outlined in Algorithm \ref{alg:1}. If the vector field is time-varying, the changes of singular patterns can also be identified. The overall schematic of our methods is illustrated in Fig. \ref{fig:pipline}. This systematic approach ensures a comprehensive analysis of the vector field data and facilitates the identification and characterization of singular patterns.

In the following discussion, we assume that the following conditions hold:\\
(A1) In an angle-based grid digraph, any two vectors that are located on two adjacent points are not parallel. This conforms to most of the cases in practice.\\
(A2) The singularity does not appear on an edge.\\
(A3) The grid points are dense enough such that each smallest square encloses at most one singularity.

\subsection{Constructing the angle-based grid digraph}

We now introduce the primary tool used in this study, known as the \textit{angle-based grid digraph}. As illustrated in Fig. \ref{fig:vecrot}, consider a given vector field $X$. Let $\textbf{\textit{v}}_A=X(A)$ and $\textbf{\textit{v}}_B=X(B)$ represent the vectors where $X$ associates with points $A$ and $B$, respectively. When we mention that a field vector $\textbf{\textit{v}}_A$ rotates along $AB$ to reach the field vector $\textbf{\textit{v}}_B$, we refer to the vector moving from point $A$ to point $B$ while concurrently undergoing rotation. This rotation occurs within the range of angles less than $\pi$. Specifically, if the rotation angle is less than $\pi$ in the counterclockwise direction, we classify it as a counterclockwise rotation. Conversely, if the rotation angle exceeds $\pi$ in the counterclockwise direction, the rotation is considered clockwise. Note that under assumption (A1), the angle magnitudes are not equal to $\pi$.
\begin{figure}[!t]
    \centering
    \includegraphics[width=0.8\linewidth]{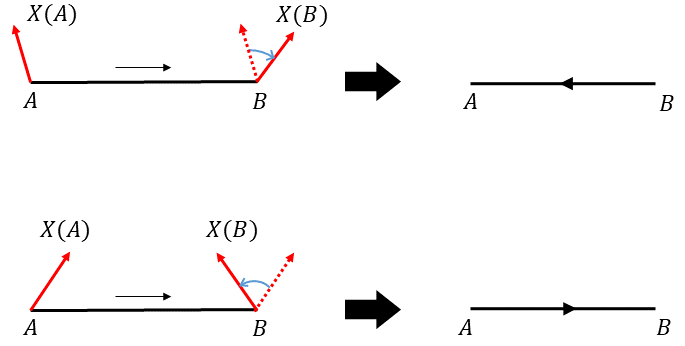}
    \caption{(Top) The field vector rotates clockwise when moving along edge $AB$; thus, the direction of $AB$ is $ B \to A$. (Bottom) The field vector rotates counterclockwise when moving along edge $AB$; thus, the direction of $AB$ is $ A\to B$.}
    \label{fig:vecrot}
\end{figure}

Let $X$ be a discrete vector field defined on $D \subseteq \mathbb{R}^2$, sampled at uniform grid points where the distance between adjacent points in the horizontal or vertical direction is a constant $\varepsilon >0$. We define $G(X,\varepsilon)=(V,E)$ as an \textit{angle-based $\varepsilon$-grid digraph} created by $X$. When the parameter $\varepsilon$ is clear from the context and does not cause confusion, we usually omit it and refer to $G(X)$ as the \textit{grid digraph} in short. The vertex set $V$ consists of all the sampling points of $X$ on $D$, and $E$ is the set of directed edges determined by the following rules (Fig. \ref{fig:vecrot} and \ref{fig:abd} ): for any two adjacent vertices $A$ and $B$ in $V$ (i.e. the distance between $A$ and $B$ is exactly $\varepsilon$), if the field vector $\textbf{\textit{v}}_A$ rotates counterclockwise along the line segment $AB$ to reach the field vector $\textbf{\textit{v}}_B$, then the directed edge is defined as $A \to B$. Conversely, if the field vector $\textbf{\textit{v}}_A$ rotates clockwise along the line segment $AB$ to reach the field vector $\textbf{\textit{v}}_B$, then the directed edge is defined as $B \to A$. The weight of a directed edge $AB$ is the absolute value of the rotation angle between the two vectors $\textbf{\textit{v}}_A$ and $\textbf{\textit{v}}_B$. Additionally, no directed edges are found between non-adjacent vertices. Fig. \ref{fig:abd} provides an example of an angle-based grid digraph created by a vector field.

According to assumption (A1), in the grid digraph $G$, no edge vanishing will occur between any two adjacent points. Thus, $G$ only consist of boundary squares and non-boundary squares, and we call these squares the \textit{minimal cells}.
According to Theorem \ref{thm:3}, the boundary squares in $G$ generate $B_1(G)$. Consequently, the minimal generators of the one-dimensional path homology $H_1(G)$ can be selected as all non-boundary squares in $G$.

\begin{figure}[!t]
    \centering
    \includegraphics[width=1\linewidth]{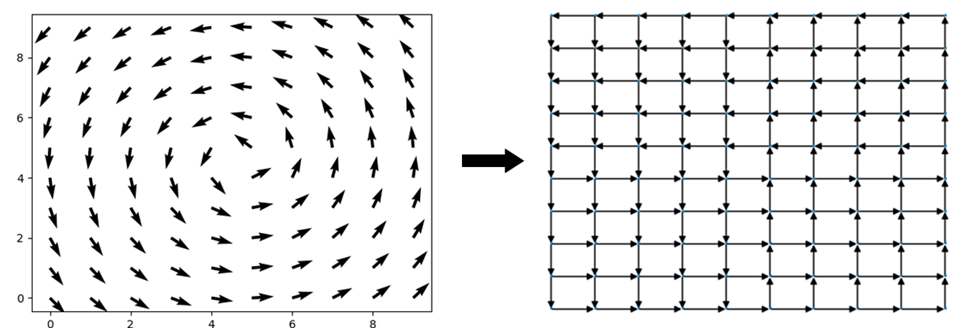}
    \caption{Example of angle-based grid digraph created by a vector field. The vertices of the digraph are the exact points where vectors are sampled.}
    \label{fig:abd}
\end{figure}

\subsection{Determining the position of singularities}
After obtaining the grid digraph $G(X)$ created by the planar vector field $X$, our task can be simplified by identifying the squares that encompass the singularities of $X$. In this manner, we can limit our search to those specific squares to identify all the singularities through the computation of one-dimensional persistent path homology.

Let us begin by considering the closed piecewise linear curve $L$ represented by the diagram:\begin{tikzcd}
    D \arrow[r, no head]                    & C                    \\
    A \arrow[u, no head] \arrow[r, no head] & B \arrow[u, no head]
\end{tikzcd}.
We assume that $L$ does not contain any singularities, and it can be denoted as $A\to B \to C \to D \to A$. The piecewise linear interpolation can be used to simplify the computation of the index of $L$ if the vector field $X$ is continuous \cite{tricoche2001continuous}. Let $\textbf{\textit{v}}_1,\textbf{\textit{v}}_2,\textbf{\textit{v}}_3,\textbf{\textit{v}}_4$ represent the four vectors that $X$ restricts on the four vertices $A,B,C,D$, respectively. The index of $L$ is then given by the following equation:
\begin{equation}\label{eqn:index}
    \operatorname{index}(L)=\frac{1}{2\pi} (\angle(\textbf{\textit{v}}_1,\textbf{\textit{v}}_2)+\angle(\textbf{\textit{v}}_2,\textbf{\textit{v}}_3)+\angle(\textbf{\textit{v}}_3,\textbf{\textit{v}}_4)+\angle(\textbf{\textit{v}}_4,\textbf{\textit{v}}_1))
\end{equation}
Here, $\angle(\textbf{\textit{v}}_i,\textbf{\textit{v}}_j)$ denotes the rotation angle from $\textbf{\textit{v}}_i$ to $\textbf{\textit{v}}_j$, which lies in the range $(-\pi,0)\cup (0,\pi)$. We define counterclockwise
rotation angles as positive and clockwise rotation angles as negative, and the sampling points of the discrete vector field are assumed to be sufficiently dense, ensuring that the absolute values of the angles are less than $\pi$ and not equal to zero.

Based on above discussions, the following theorem tells us that singularities can only exist in the interior area of non-boundary squares.
The proof of this theorem is shown in Appendix A.
\begin{theorem}\label{thm:6}
    Let $G$ represent the angle-based grid digraph created by the vector field $X$, with each smallest square in $G$ containing at most one singularity of $X$. We define $S$ as the set of all smallest squares that encompass a singularity of $X$ and $T$ as the set of representations of minimum generators of the one-dimensional path homology group of $G$. In other words, $T$ consists of all squares that have one of the following shapes:\begin{tikzcd}
        {} \arrow[d] & {} \arrow[l] & {} \arrow[d] & {} \arrow[l]           & {} \arrow[d] \arrow[r] & {}                     \\
        {} \arrow[r] & {} \arrow[u] & {}           & {} \arrow[u] \arrow[l] & {}                     & {} \arrow[u] \arrow[l]
    \end{tikzcd}. It follows that $S \subseteq T$.
\end{theorem}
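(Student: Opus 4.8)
The plan is to reduce the claim to a single statement about individual cells: a \emph{boundary} square of $G$ can never enclose a singularity of $X$. Since assumption (A1) forces the minimal cells of $G$ to be exactly boundary and non-boundary squares, and $T$ is precisely the set of non-boundary squares of $G$ (these being the representations of the minimal generators of $H_1(G)$, as recorded just above Theorem~\ref{thm:6}), this cell-level statement gives $S\subseteq T$ at once.

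So fix a smallest square $L$ of $G$ and parametrize its boundary counterclockwise as a closed piecewise-linear curve $P_1\to P_2\to P_3\to P_4\to P_1$. By (A3) no corner $P_k$ is a singularity, so the field vectors $v_1,\dots,v_4$ at the corners are nonzero; put $\alpha_k=\angle(v_k,v_{k+1})$ with indices read mod $4$. By (A1) together with the density hypothesis each $\alpha_k$ lies in $(-\pi,0)\cup(0,\pi)$, and by \eqref{eqn:index}, $\operatorname{index}(L)=\frac{1}{2\pi}\sum_{k=1}^{4}\alpha_k$. The bridge to the digraph is the construction rule of the angle-based grid digraph itself: the edge of $G$ on the side $P_kP_{k+1}$ is oriented $P_k\to P_{k+1}$ exactly when $\alpha_k>0$ — that is, exactly when it agrees with the counterclockwise traversal — and oriented the other way when $\alpha_k<0$. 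Hence the cyclic sign pattern of $(\alpha_1,\alpha_2,\alpha_3,\alpha_4)$ is nothing but the orientation pattern of the four sides of $L$.

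Now assume $L$ is a boundary square. By definition it has one corner with both incident sides pointing away from it and the opposite corner with both incident sides pointing into it; tracing this through an arbitrary counterclockwise parametrization (a short check over the rotations and reflections of the defining picture) shows that a boundary square always yields exactly two positive and two negative $\alpha_k$'s (in fact two consecutive of each sign). Then $\sum_k\alpha_k$ is a sum of two numbers in $(0,\pi)$ and two numbers in $(-\pi,0)$, so $-2\pi<\sum_k\alpha_k<2\pi$, i.e.\ $-1<\operatorname{index}(L)<1$; as $\operatorname{index}(L)$ is an integer, $\operatorname{index}(L)=0$. On the other hand, if $L$ enclosed a singularity then by (A4) it would enclose exactly one, and by (A2) together with the fact recalled above that a singularity of a piecewise-linearly interpolated field has index $\pm1$, that singularity would have index $+1$; Theorem~\ref{thm:1} would then force $\operatorname{index}(L)=1$, a contradiction. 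Therefore no boundary square encloses a singularity, so every square in $S$ is a non-boundary square, i.e.\ $S\subseteq T$.

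I expect the only genuinely delicate step to be the sign-pattern bookkeeping in the third paragraph: one must verify that the "source corner / sink corner" description of a boundary square corresponds, for \emph{every} choice of counterclockwise starting vertex and for both orientations, to the cyclic pattern $(+,+,-,-)$ rather than to a three-to-one or an alternating pattern — the latter two being exactly the non-boundary shapes, for which $\sum_k\alpha_k$ can indeed reach $2\pi$. Once this is pinned down, the remainder is the one-line interval estimate on $\sum_k\alpha_k$ plus Theorem~\ref{thm:1}. It is also worth noting explicitly that (A1) and (A3) are precisely what guarantee that the four corner vectors are nonzero and the four rotation angles are well-defined and bounded away from $0$ and $\pm\pi$, which is what \eqref{eqn:index} requires.
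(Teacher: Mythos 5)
Your proposal is correct and follows essentially the same route as the paper: reduce to showing that a boundary square has winding number $0$ via Equation~(\ref{eqn:index}), observe that the four rotation angles split into two in $(0,\pi)$ and two in $(-\pi,0)$ so their sum lies in $(-2\pi,2\pi)$, and invoke Theorem~\ref{thm:1} (with (A2), (A4)) to exclude a singularity. The only difference is presentational — the paper fixes one representative boundary square ``without loss of generality'' while you carry out the sign-pattern bookkeeping over all rotations and reflections explicitly, and you spell out the use of (A2)/(A4) that the paper leaves implicit.
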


We can also prove that singularities of $X$ could not exist in the interior of the squares that have the shape \begin{tikzcd}
    {}                     & {} \arrow[l] \arrow[d] \\
    {} \arrow[r] \arrow[u] & {}                    
\end{tikzcd}. Thus, the singularities of $X$ can only exist in the interior of the squares that have one of the following shapes:\begin{tikzcd}
    {} \arrow[d] & {} \arrow[l] & {} \arrow[d] & {} \arrow[l]           \\
    {} \arrow[r] & {} \arrow[u] & {}           & {} \arrow[u] \arrow[l]
\end{tikzcd}.

To extract the positions of singularities using one-dimensional persistent path homology, certain properties of singular patterns within a planar vector field $X$ must be considered. If a singular pattern exists in $X$, it must satisfy the following properties. The proofs of these properties are shown in Appendix A.

\begin{property}\label{prop:1}
    Every smallest square containing a singularity must have an edge $e$ with locally greatest weight, which means the weights of all edges (except $e$) of all squares containing $e$ are less than the weight of $e$.
\end{property}

\begin{property}\label{prop:2}
    Suppose $G$ is a grid digraph generated by $X$. We compute the one-dimensional persistent path homology of the digraph filtration created by $G$, as defined in Definition \ref{def:PPH}. This computation yields a corresponding persistence diagram. If the selected region has only one singularity, then after adding the edge with the greatest weight, a new representation of the minimum generator of $H_1(G)$ containing the singularity emerges.
\end{property}

In this property, the last square containing the singularity is formed by the edge with the greatest weight. Let the weight of this edge be denoted as $a$. This edge does not disappear in the digraph filtration, and therefore, it corresponds to a point $(a,+\infty)$ on the persistence diagram $Dgm_1$. Consequently, we can identify all the points on $Dgm_1$ that take the form $(a_i,+\infty)$, indicating the points that persist until infinity. For each $a_i$ and every edge in $G(X)$ with a weight of $a_i$, we examine the winding number of the adjacent smallest squares associated with this edge. The square with non-zero index is what we are looking for.

Once the smallest square containing the singularity has been identified, we can determine the approximate position of the singularity based on the weights of the edges. While multiple approaches may be used to obtain the approximate coordinates, all edges evidently have the same weight, and the singularity should be located at the geometric center of the square. However, given that edges closer to the singularity possess greater weight, we can estimate the singularity’s position using a weighting method. In this study, we utilize the concept of \textit{the weighted center of the square} to approximate the singularity's coordinates, given by:
\begin{equation}\label{eq2}
    (x,y)=\left(\sum_{i=1}^{4} \frac{w_i x_i}{w},\sum_{i=1}^{4} \frac{w_i y_i}{w}\right)
\end{equation}
Here, $(x_i,y_i)$ represents the midpoint coordinates of the four edges of the square, with corresponding weights $w_i$ for $i=1,2,3,4$. The total weight is denoted as $w$, calculated as $w=\sum_{i=1}^4 w_i$. 
Hence, the center can only be in the diamond formed by the edge midpoints of the square's edges.
When considering two parallel edges within the square, the weight center is closer to the edge with a greater weight, similar to the singularity. Consequently, horizontally and vertically, the weight center and the singularity are positioned on the same side of the geometric center of the square. Therefore, if we use the notation $\varepsilon$ to denote the length of the square's edges, the errors in the $x$ and $y$ coordinates between the weight center and the singularity are less than $\varepsilon /2$.
The procedure for extracting the position of singularity is included in Algorithm \ref{alg:1}, illustrated in subsection \ref{D}.

\subsection{Algorithm and complexity analysis}\label{D}
Our algorithm for extracting singularities from a planar discrete vector field can be described as the Algorithm \ref{alg:1}. 
To analyze its complexity, we assume the input $X$ has a rectangular grid of points with $m$ points in each row and $n$ points in each column (hence $mn$ grid points in $X$ in total), and in $Dgm_1(G)$ there are $k$ points persist to $+\infty$. Note that $k$ represents the number of non-boundary squares in $G$, thus is not greater than $mn$. We consider the time complexity of each steps. Step 1 has a complexity of $O(mn)$, which represents a linear relationship between the input sizes $m$ and $n$. The computation persistent path homology of planar digraph in step 2 exhibits a complexity of $O((mn)^\omega)$ \cite{dey2022efficient}, where $\omega$ denotes the exponent of matrix multiplication, currently known to be less than $2.373$. This step is particularly significant due to the high exponent, which can dominate the overall complexity for large input sizes. The sort process in step 3 has a complexity of $O(k \log k)$. Step 4 has a complexity of $O(kmn)$, but it does not exceed $O((mn)^\omega)$ since $k$ is controlled by $mn$ and is much smaller than $mn$ in general cases.
Considering the relative magnitudes of these complexities, the step of computing persistent path homology with $O((mn)^\omega)$ stands out as the dominant term, determining the overall complexity of the algorithm. Therefore, if there are $N$ grid points in $X$, the overall time complexity of Algorithm \ref{alg:1} is $O((N)^\omega)$.

\begin{algorithm}[H]
    \renewcommand{\algorithmicrequire}{\textbf{Input:}}
    \renewcommand{\algorithmicensure}{\textbf{Return:}}
    \caption{Extracting Singularities of a Vector Field}
    \label{alg:1}
    \begin{algorithmic}[1]
        \REQUIRE A planar discrete vector field $X$ containing singularities.
        \STATE Compute the grid digraph $G$ from $X$.
        \STATE Compute the persistence diagram $Dgm_1(G)$ using the one-dimensional persistent path homology of the digraph filtration created by $G$.
        \STATE Let $I$ be the set of points in $Dgm_1(G)$ of the form $(a_i,+\infty)$, and sort the elements in $ I $ in non-decreasing order of $a_i$.
        \STATE \textbf{For} each point $(a_i,+\infty)$ in $I$ \textbf{do}\\
        \hspace{1em} \textbf{For} each edge with weight $a_i$ in $G$ \textbf{do}\\
        \hspace{2em} \textbf{For} each adjacent smallest square to this edge \textbf{do}\\
        \hspace{3em} Calculate the winding number of the square.\\
        \hspace{3em} \textbf{If} the winding number is 1 \textbf{then}\\
        \hspace{4em} Compute the singularity $(x,y)$ using Eq. (\ref{eq2}).\\
        \hspace{3em} \textbf{End if}\\
        \hspace{2em} \textbf{End for}\\
        \hspace{1em} \textbf{End for}\\
        \textbf{End for}
        \ENSURE The singularities of the vector field $X$. 
    \end{algorithmic}
\end{algorithm}

\section{Differences Measurement of Vector Fields}\label{sec:5}
Our method also introduces a novel approach to compare the topology of two planar vector fields using persistent path homology theory. This involves computing the distances of 1-dimensional persistence diagrams obtained from two vector fields. Assuming two planar vector fields $X_1$ and $X_2$ are defined on the same domain, computing their respective 1-dimensional persistence diagrams $PD_1$ and $PD_2$, and subsequently evaluating the corresponding distance $d(PD_1,PD_2)$ (here $d$ represents the bottleneck distance or Wasserstein distance), allows us to measure the dissimilarity between the two persistence diagrams, thereby comparing the topological differences between $X_1$ and $X_2$. 

Moreover, when considering a time-varying vector field restricted to a fixed domain, we can track the variations in topology by sequentially computing the distance between two persistence diagrams obtained at two neighboring moments. A small distance between the persistence diagrams of neighboring moments suggests a minor variation in the singular pattern of the vector field, indicating greater stability during this period. Conversely, a large distance indicates a significant variation in the singular pattern of the vector field, indicating less stability during this period. This will be helpful to quantify and analyze changes in the vector field.

In practice, as we concentrate on the relative magnitude of the distances between persistence diagrams, using either the bottleneck distance or the Wasserstein distance can have a similar effect, given that the bottleneck distance can be considered as a special case of the Wasserstein distance. Therefore, in the subsequent experiments, we employ the bottleneck distance as the primary measuring tool.

\begin{figure*}[!t]
    \centering
    \includegraphics[width=1.0\linewidth]{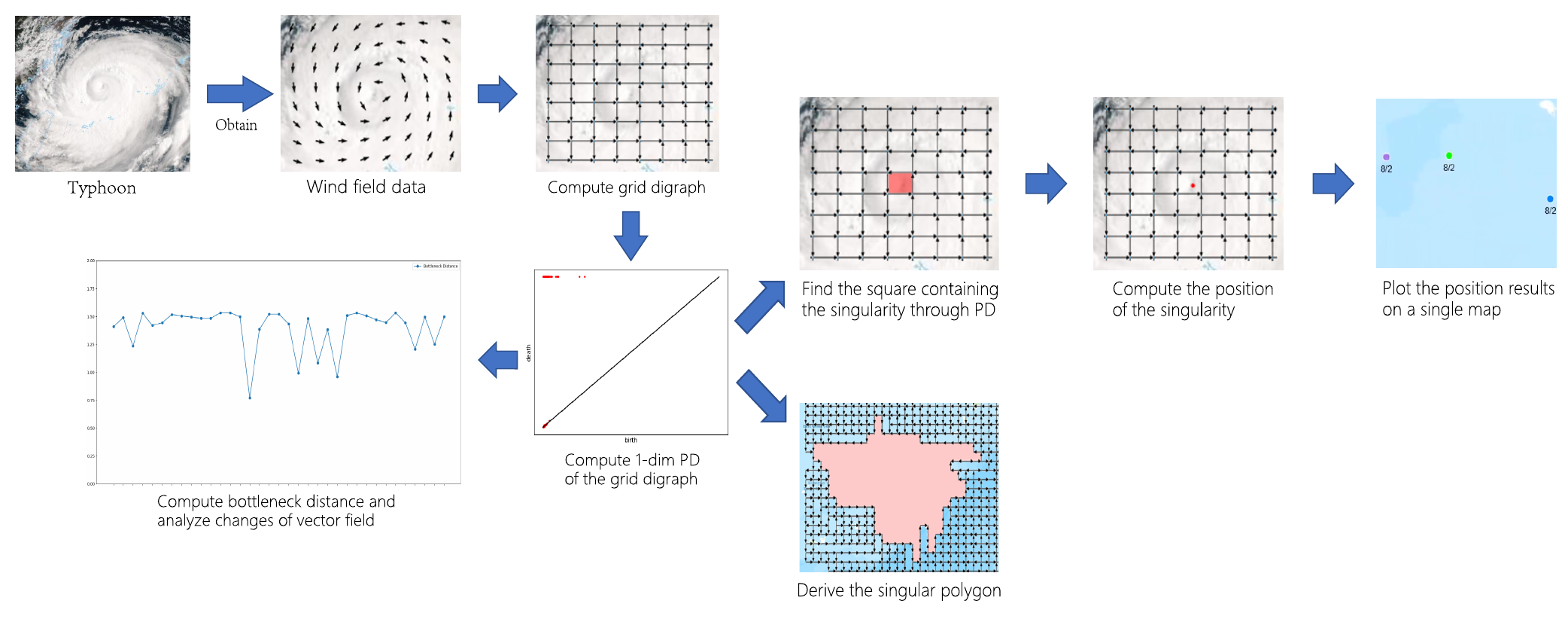}
    \caption{Pipeline of our method for analyzing the singular patterns of vector fields.}
    \label{fig:pipline}
\end{figure*}

\begin{figure*}[!t]
    \centering
    \includegraphics[width=0.9\linewidth]{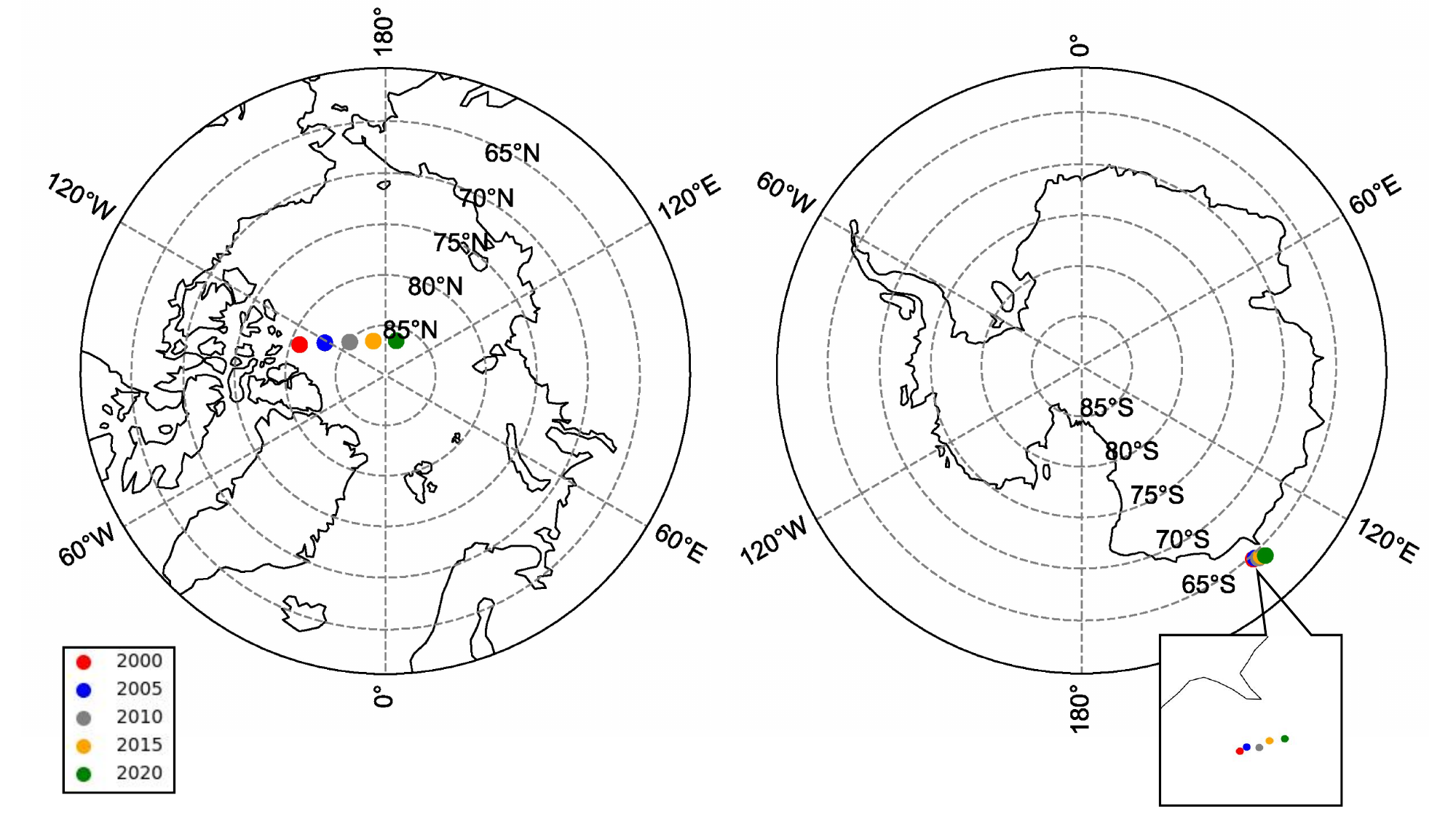}
    \caption{The positioning results of north dip pole (left) and south dip pole (right) for the years 2000, 2005, 2010, 2015, and 2020.}
    \label{fig:magf}
\end{figure*}
\begin{figure}[!t]
    \centering
    \subfloat[]{\includegraphics[width=1\linewidth]{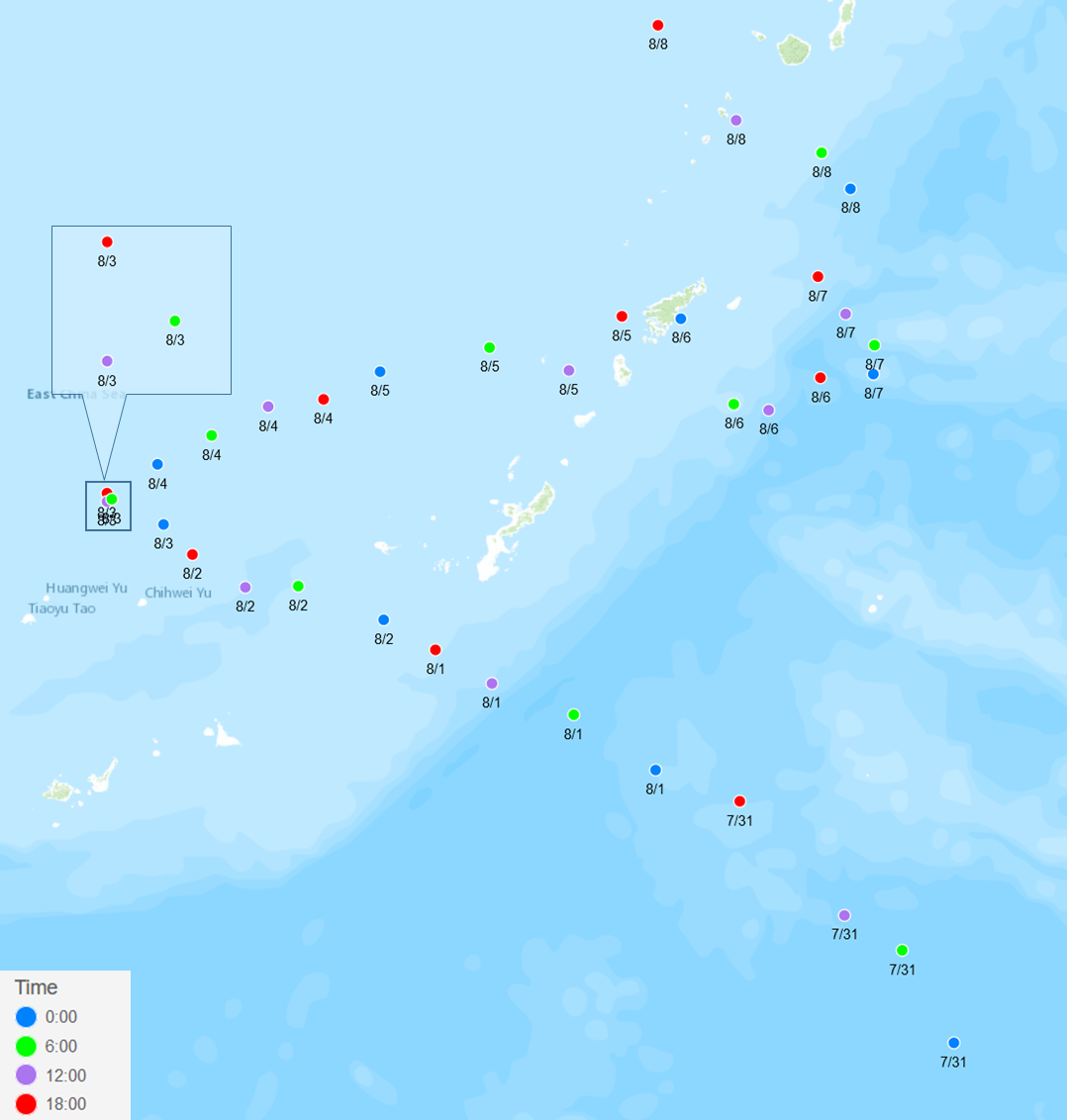}%
        \label{fig:kn}}
    \hfil
    \subfloat[]{\includegraphics[width=1\linewidth]{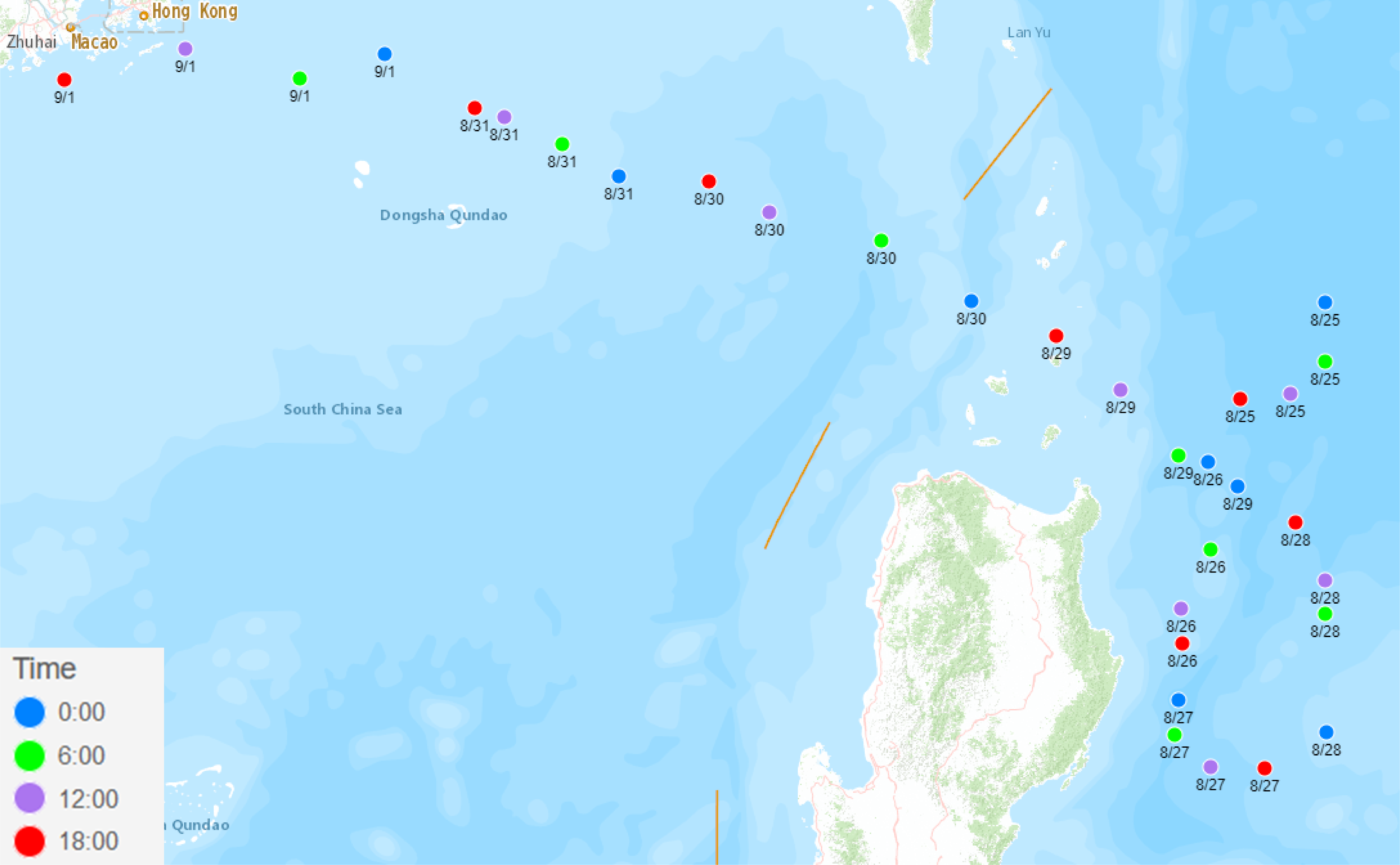}%
        \label{fig:sl}}
    \caption{Center tracking results of two typhoons. (a) Results of typhoon Khanun from 2023/07/31 to 2023/08/08. (b) Results of typhoon Saola from 2023/08/25 to 2023/09/01.}
    \label{fig:typhoon}
\end{figure}

\section{Applications and discussions}\label{sec:6}
All the experiments in this section were conducted on a server equipped with a 2.90GHz Intel Core i7-10700 CPU and 16.0 GB RAM. Additionally, Fig. \ref{fig:pipline} shows the pipeline of our procedure, and the algorithm proposed in \cite{dey2022efficient} is used for computing one-dimensional persistent path homology.

\subsection{Positioning magnetic dip poles}
To begin with, we present the application of our method to determine the position of Earth's magnetic dip poles using the geomagnetic field datasets obtained from the International Geomagnetic Reference Field (IGRF) \cite{alken2021international}, a mathematical model describing the geomagnetic field and its changes. The magnetic dip poles are defined as the locations where the main magnetic field as a whole is normal to Earth’s surface, which provides a crucial reference for local orientation when navigating on or close to Earth’s surface at high-latitudes \cite{alken2021international}. However, observations have shown that the positions of the two magnetic dip poles change from year to year \cite{alken2021international}, so it is necessary to calculate and update their positions in time. Our approach is well-suited to address this issue. When we project geomagnetic field data onto a map, resulting in a planar vector field, the two dip poles (north dip pole and south dip pole) become singularities of this planar vector field, since the projected vectors are zero at these points. In this experiment, we select two domains around the two dip poles respectively and choose geomagnetic field data at zero altitude. After projecting onto the map, discrete vector fields sampled around the two dip poles are obtained. Using our method, we determine the locations of the two dip poles for the years 2000, 2005, 2010, 2015, and 2020, and compare our results with the data of the actual location provided by IGRF \cite{alken2021international}. These results are available in the Appendix B (Table. \ref{tab:IGRF}). It is evident that our method incurs only minor positioning errors, thus reaffirming its effectiveness. The positioning results are also plotted on maps (Fig. \ref{fig:magf}).

\subsection{Tracking the center of typhoons} 
Another application is tracking the center of two tropical cyclones, namely Khanun and Saola, which emerged in 2023 and exhibited unusual trajectories. Here we use wind field datasets obtained from the Remote Sensing Systems Cross-Calibrated Multi-Platform's 6-hourly ocean vector wind analysis product on a 0.25 deg grid \cite{mears2022rss}; it is a gridded Level 4 (L4) product that provides vector wind over the world's oceans and provides gridded wind vector data at 4 times of day: 00:00Z, 06:00Z, 12:00Z, and 18:00Z (here Z stands for Universal Time Coordinated, UTC). 
Based on these wind fields, we determine the centers of tropical cyclones (which can be regarded as spiral sinks). By processing the wind field data, we acquire the approximate latitude and longitude of the center positions of the tropical cyclone for each provided moment. These positions are then plotted on a single map, as depicted in Fig. \ref{fig:typhoon} (a) displays the positioning result for Khanun, whereas Fig. \ref{fig:typhoon} (b) presents the positioning result for Saola. Each point on the map is labeled with the corresponding date, and the color indicates the observation time. 
The obtained positioning longitudes and latitudes for the centers of Saola and Khanun are provided in Appendix B (Table. \ref{tab:SL} and Table. \ref{tab:KN}), alongside the actual longitudes and latitudes, as well as the errors. 
Here, the actual positions of the tropical cyclones are obtained from the Regional Specialized Meteorological Center for Marine Meteorological Service, Beijing\footnote{http://eng.nmc.cn/typhoon/}.

\subsection{Vector field difference measurement}
In the following, we will use the wind field datasets that were used in the previous subsection as an example to demonstrate how our method can measure the variation of time-varying vector fields. Similar to the work \cite{Soler2018LiftedWM}, we will use the resulting persistence diagrams obtained by wind fields to measure the variations of singular patterns of the wind fields of the two tropical cyclones, respectively. 

Specifically, we sequentially compute the bottleneck distance between the persistence diagrams of two neighboring moments. The results are displayed in Fig. \ref{fig:bottleneck}, where the point labeled $k$ represents the bottleneck distance between the 1-dimensional persistence diagram at the $(k+1)$-th moment and the persistence diagram at the $k$-th moment. For instance, the first moment of Khanun is 2023-07-31-06:00 UTC; thus, the first blue point's value represents the bottleneck distance between the persistence diagram obtained at 2023-07-31-06:00 UTC and the one obtained at 2023-07-31-00:00 UTC. The first moment of Saola is 2023-08-25-06:00 UTC; hence, the first orange point's value represents the bottleneck distance between the persistence diagram obtained at 2023-08-25-06:00 UTC and the one obtained at 2023-08-25-00:00 UTC. 

From the blue line in Fig. \ref{fig:bottleneck}, it is evident that the bottleneck distance reached a local minimum at the moment 2023-08-03-18:00 UTC, coinciding with the time when Khanun became stationary near a position since a notable change in trajectory was taking place. This period exhibited reduced variability in the wind field, resulting in a smaller difference between the persistence diagrams.
Similarly, in the orange line in Fig. \ref{fig:bottleneck}, at the moment 2023-08-27-18:00 UTC, Saola became stationary near a position where its trajectory was changing direction, indicating small variability between the persistence diagrams of the corresponding wind fields. Similar features of bottleneck distance can be observed at other moments, such as 2023-08-31-06:00 UTC, during which period Saola moves slowly within the vicinity of a location.
Conversely, if a tropical cyclone undergoes significant movement or multiple turns over a period of time, the corresponding bottleneck distance will be larger. Moreover, a significant fluctuation of the bottleneck distance may indicate that the tropical cyclone is undergoing unstable movement and change.

Hence, considering our finding that the magnitude of the bottleneck distance between the corresponding persistence diagrams of the wind fields over a specific period reflects the extent of changes in the singular patterns of the tropical cyclones during that time, coupled with the widespread belief that these changes may result from the influence of various natural factors on the tropical cyclones, we also believe that our method can be applied to analyze the factors contributing to the variations in tropical cyclones.

\begin{figure*}[!t]
    \centering
    \includegraphics[width=0.65\linewidth]{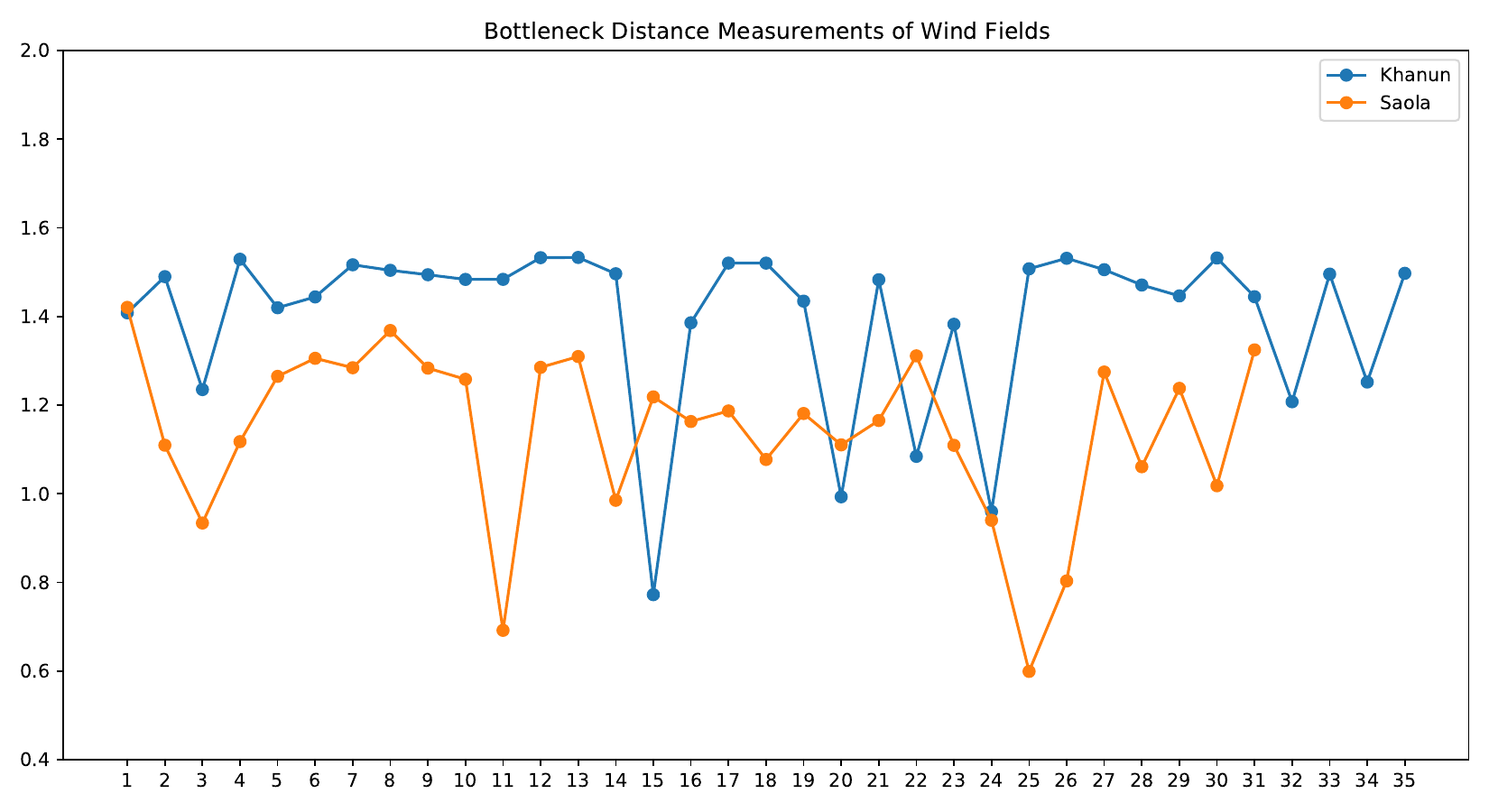}
    \caption{Bottleneck distances computed by one-dimensional persistence diagrams derived from Khanun and Saola datasets respectively.}
    \label{fig:bottleneck}
\end{figure*}
\begin{figure*}[!t]
    \centering
    \includegraphics[width=1.0\linewidth]{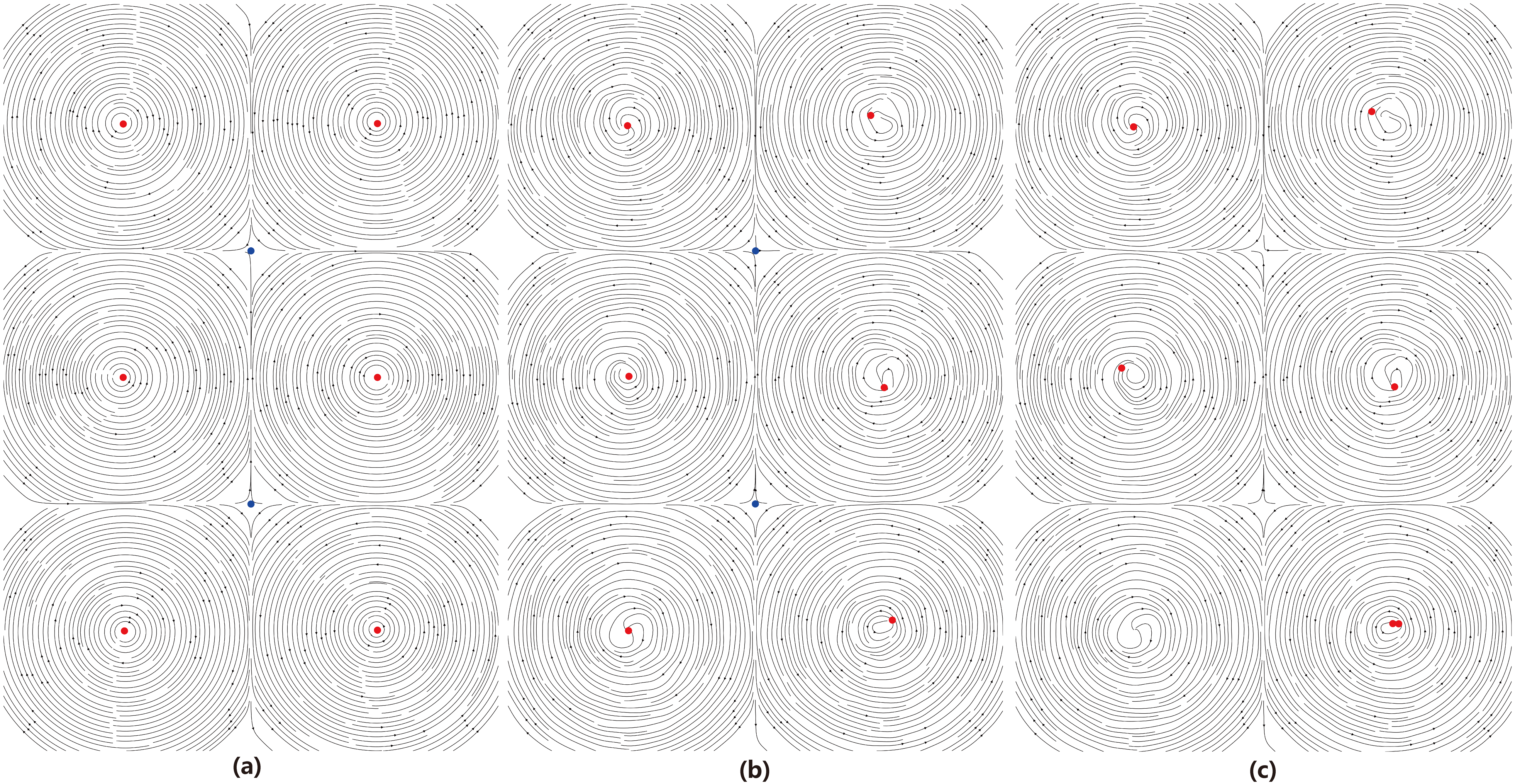}
    \caption{Robustness of our method to noise. (a) A vector field is presented by its streamlines, and six center points (red) and two saddles (blue) are successfully extracted by our method. (b) Noise is added in the vector field in (a), while our method can keep identifying the singularities successfully. (c) Comparison of singularities identified by triangular linear interpolation method \cite{tricoche2002topology} from (b). It fails to detect some singularities while incorrectly identifying a spurious singularity.}
    \label{fig:noise}
\end{figure*}
\begin{figure*}[!t]
    \centering
    \includegraphics[width=1.0\linewidth]{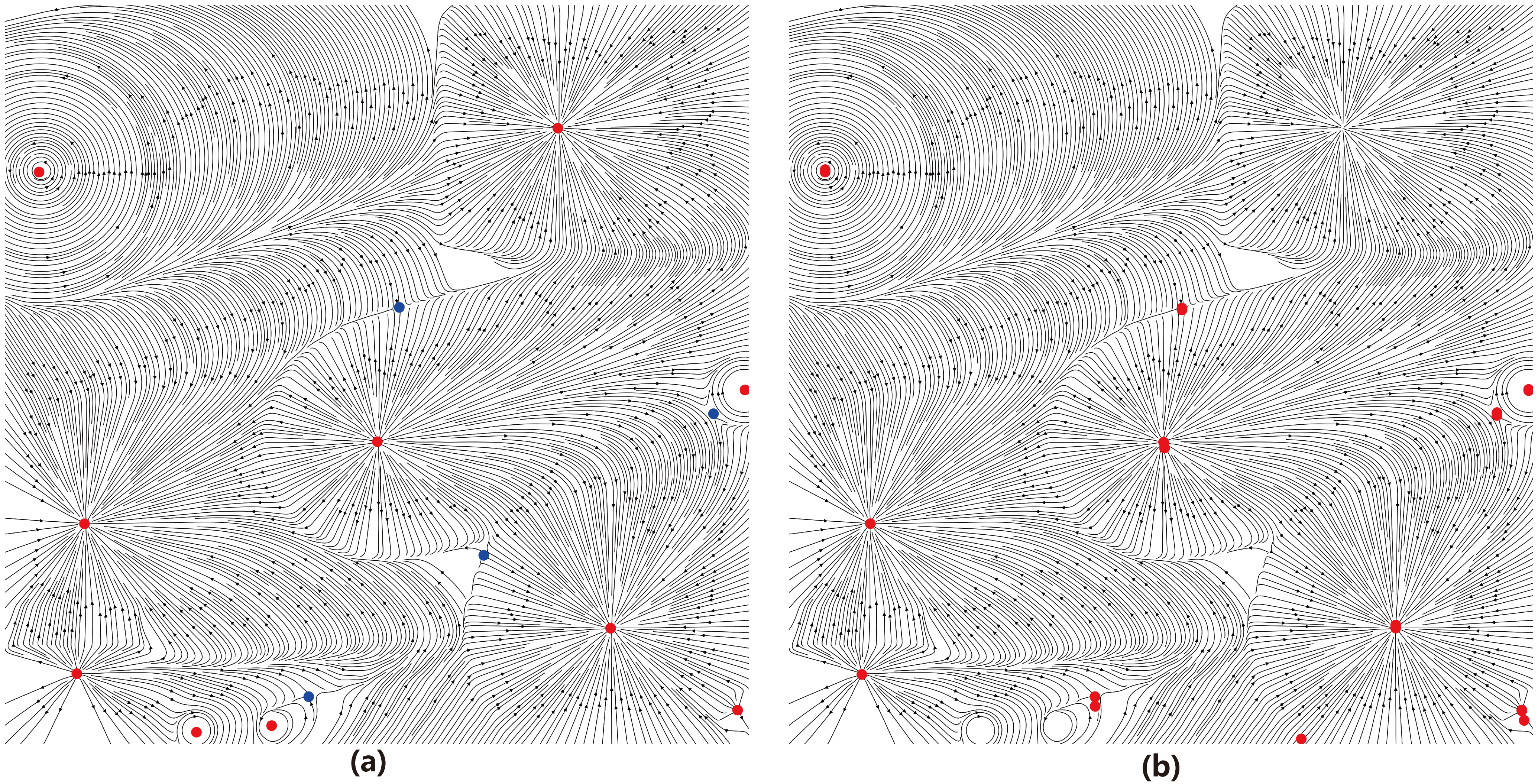}
    \caption{The streamlines of a vector field with 10,000 vectors. (a) Singularities identified by our method. The detected singularities with positive index are drawn in red, while singularities with negative index are drawn in blue. (b) Comparison of singularities identified by triangular linear interpolation method \cite{tricoche2002topology}. It fails to detect some singularities while incorrectly identifying some spurious singularities.}
    \label{fig:largescale}
\end{figure*}

\subsection{Robustness and comparisons}
In this section, we first demonstrate the robustness of our method for detecting singular patterns against noise, using the vector field shown in Fig. \ref{fig:noise} (a), which contains six center points and two saddles. Here, to show the increasing noise, we draw these vector fields in the form of streamlines. As noise is added in the vector field (Fig. \ref{fig:noise} (b)), our method can still identify all singularities without omission or false detection. In contrast, the triangular linear interpolation method \cite{tricoche2002topology} (Fig. \ref{fig:noise}(c)) fails to detect one center point and two saddle points while incorrectly identifying a spurious singularity (such a point has index 0 but is identified as a singularity). 
Moreover, the proposed method shows robustness in analyzing singular patterns in large-scale vector fields. As demonstrated in Fig. \ref{fig:largescale} (a), which shows streamlines of a noisy vector field comprising 10,000 vectors with multiple singularity types, our approach successfully identifies all singularities and correctly classifies their types, while in Fig. \ref{fig:largescale} (b), the triangular linear interpolation method does not find all singularities and identify their types correctly. These results clearly demonstrate the robustness of our approach for handling noise and large-scale vector fields.

Then we compare the mean positioning errors and running time between our method and several classical methods.
To calculate the mean errors, suppose that we have data for $n$ moments, and we denote the $i$-th actual longitude as $al_i$ and the $i$-th positioning longitude as $pl_i$. The mean longitude error can be computed using the formula $\frac{1}{n}\sum_{i=1}^{n} \left|al_i-pl_i \right|$, and the mean latitude error can be computed similarly. Table \ref{tab:comknsl} provides a comparison of the mean errors obtained using our method with other singularity extraction methods, they are methods based on triangular linear interpolation of vectors \cite{tricoche2002topology}, computation of the Jacobian matrix \cite{helman1989representation}, and Hodge decomposition \cite{polthier2003identifying} respectively. 
The average running times are also shown in Table \ref{tab:comknsl}. The Khanun datasets consists of 1600 sampling vectors (resulting in a digraph of 1600 vertices and 3120 edges), the Saola datasets comprises 1924 sampling vectors (resulting in a digraph of 1924 vertices and 3759 edges), and the geomagnetic field datasets consists of 1600 sampling vectors (resulting in a digraph of 1600 vertices and 3120 edges). 
Based on the results, our method achieves smaller errors than the method proposed by \cite{polthier2003identifying}, and our time cost is lower than \cite{polthier2003identifying}, which involves performing Hodge decomposition and locating local extrema of two energy functionals derived from the rotation-free and divergence-free components on the refined grids.
Furthermore, our approach can achieve errors nearly as small as the numerical methods of \cite{tricoche2002topology} and \cite{helman1989representation}, although it has a slightly longer time cost than those numerical methods. However, these numerical methods may fail to correctly identify singularities or encounter spurious singularities due to precision issues.
For instance, in the experiments, both the triangular linear interpolation method and the Jacobian matrix method yield spurious singularities. The number of spurious singularities obtained from all vector fields in the corresponding datasets is shown in Table \ref{tab:comknsl}. For example, in the geomagnetic field datasets, there are 10 vector fields and 10 corresponding singularities in total, but the triangular interpolation method finds 15 singularities, including 5 spurious singularities, and the Jacobian method finds 12 singularities, including 2 spurious singularities. In contrast, our method, which relies on the singularity index, finds all of the singularities without any spurious ones. Additionally, our method has the capability to measure the variations between vector fields, while the other methods are unable to accomplish this.

\begin{table*}[!t]  
    \centering  
    \caption{Comparative Results}
    \label{tab:comknsl}  
    \resizebox{0.8\textwidth}{!}{%
        \begin{tabular}{|c|c|c|c|c|}  
            \hline  
            \multicolumn{5}{|c|}{Comparative Results for Khanun Datasets (36 vector fields)} \\  
            \hline  
            Method & Errors of longitude($^{\circ}$) & Errors of latitude($^{\circ}$) & Running time(s) & Spurious singularity\\   
            \hline
            Triangular interpolation \cite{tricoche2002topology} & 0.134 & 0.107 & 27.21 &15\\  
            \hline
            Jacobian method \cite{helman1989representation} & 0.126 & 0.109 & 45.25 & 9\\ 
            \hline  
            Hodge decomposition \cite{polthier2003identifying} & 0.205  & 0.147 & 566.64 & 0\\  
            \hline
            \textbf{our method} & 0.131 & 0.118 & 70.19 & 0\\ 	  
            \hline  
            \multicolumn{5}{|c|}{Comparative Results for Saola Datasets (32 vector fields)} \\  
            \hline  
            Method & Errors of longitude($^{\circ}$) & Errors of latitude($^{\circ}$) & Running time(s) & Spurious singularity\\     
            \hline
            Triangular interpolation \cite{tricoche2002topology} & 0.108 & 0.093 & 34.76 & 15\\  
            \hline
            Jacobian method \cite{helman1989representation} & 0.111 & 0.093 & 56.00 &2\\  
            \hline  
            Hodge decomposition \cite{polthier2003identifying} & 0.129  & 0.133 & 689.39 & 0\\
            \hline
            \textbf{our method} & 0.110 & 0.102 & 103.08 &0\\  
            \hline  
            \multicolumn{5}{|c|}{Comparative Results for Geomagnetic Field Datasets (10 vector fields)} \\  
            \hline  
            Method & Errors of longitude($^{\circ}$) & Errors of latitude($^{\circ}$) & Running time(s) & Spurious singularity\\  
            \hline
            Triangular interpolation \cite{tricoche2002topology} & 0.028 & 0.056 & 27.90 &5\\  
            \hline
            Jacobian method \cite{helman1989representation} & 0.030 & 0.000 & 47.31 & 2\\  
            \hline  
            Hodge decomposition \cite{polthier2003identifying} & 0.089  & 0.182 & 572.02 &0\\ 
            \hline
            \textbf{our method} & 0.037 & 0.059 & 90.95 &0\\ 	  
            \hline  
        \end{tabular}%
    }  
\end{table*}

\subsection{Discussion}
Our method exhibits several limitations. If the sampling grid is not sufficiently dense, some singularities may not be extracted because multiple singularities may exist within the same square, and their indices can cancel each other out. For example, if a singularity with index 1 and a singularity with index -1 locate in an identity square, the index of the square will finally becomes 0 and will not be identified as containing singularities.
Furthermore, singularities located on the edges of the grid digraph may be overlooked, since the direction of this edge is unsure and the position of the singularity on the edge cannot be determined accurately. 
But in general, these situations are rarely to occur due to practical considerations and limitations in numerical accuracy.
Additionally, since the generators of 2-dimensional path homology group is unclear currently, it is hard to generalize the proposed method to the vector fields in higher dimensional space.

\section{Conclusions}\label{sec:7}
In this study, we propose a novel method for determining the singularity within a planar vector field exhibiting a singular pattern. Moreover, we compare the changes of singular patterns of time-varying vector fields using persistence diagrams. Our approach involves the conversion of discrete vector fields into angle-based grid digraphs, which are then analyzed using persistent path homology. Experimental results demonstrate that our proposed method achieves low mean errors in locating the centers of tropical cyclones and locations of dip poles, and effectively measures the changes in tropical cyclone features over time. 

In future work, we aim to explore alternative techniques for converting vector fields into digraphs to accommodate diverse applications. 
Furthermore, we will study how to address challenges related to vector fields with high-order singularities and vector fields in higher dimensional Euclidean spaces or surfaces, and investigate the potential utilization of higher dimensional (persistent) path homology as a new analytical tool.

\section*{Acknowledgments}
This work is supported by the National Natural Science Foundation of China under Grant nos. 62272406, 61972316. CCMP Version-3.0 vector wind analyses are produced by Remote Sensing Systems. Data are available at www.remss.com.

\section*{CRediT authorship contribution statement}
\textbf{Yu Chen:} Conceptualization, Investigation, Formal analysis, Methodology, Software, Validation, Writing – original draft, Writing – review \& editing.
\textbf{Hongwei Lin:} Conceptualization, Formal analysis, Methodology, Supervision, Funding acquisition, Writing – original draft, Writing – review \& editing.

\section*{Declaration of competing interest}
The authors declare that they have no known competing financial interests or personal relationships that could have appeared to influence the work reported in this paper.

\section*{Data availability}
Data will be made available on request.

\bibliographystyle{cag-num-names}
\bibliography{bib1.bib}

\section*{Appendices}
\begin{appendices}
\section{Proofs of some theorems and properties}
\subsection{Proof of the Theorem 3}
\begin{proof}
	It is sufficient to prove that the singularities could not exist in the interior area of the boundary squares. Without loss of generality, let us consider the boundary square:
	\begin{tikzcd}
		C \arrow[r]           & D           \\
		A \arrow[r] \arrow[u] & B \arrow[u]
	\end{tikzcd}.
	According to the definition of the angle-based grid digraph, when moving from $A$ to $B$ and from $B$ to $D$, the vector undergoes a counterclockwise rotation, with the rotation angle ranging between $(0,\pi)$. Conversely, when moving from $D$ to $C$ and from $C$ to $A$, the vector rotates clockwise, with the rotation angle ranging between $(-\pi,0)$. By utilizing equation 
	$$\operatorname{index}(L)=\frac{1}{2\pi} (\angle(\textbf{\textit{v}}_1,\textbf{\textit{v}}_2)+\angle(\textbf{\textit{v}}_2,\textbf{\textit{v}}_3)+\angle(\textbf{\textit{v}}_3,\textbf{\textit{v}}_4)+\angle(\textbf{\textit{v}}_4,\textbf{\textit{v}}_1))$$
	we observe that along the closed curve $ABDCA$, the sum of rotation angles falls within the range of $(-2\pi,2\pi)$. Consequently, the number of revolutions made by the vector field $X$ while traveling around the curve $ABDCA$ is zero because zero is the only integer within the interval $(-1,1)$. As stated in Theorem 1 in main paper, this implies the absence of any singularities of $X$ within the interior of the boundary square.
\end{proof}
\begin{remark}
	In fact, for the square type
	\begin{tikzcd}
		C \arrow[d] \arrow[r] & D                     \\
		A                     & B \arrow[l] \arrow[u]
	\end{tikzcd}, its index is also zero since zero is the only integer within the interval $(-1,1)$. Thus this type of square will not contain singularity as well.
\end{remark}

\subsection{Proof of the Property 1 and Property 2}
We first prove the following Lemma:
\begin{lemma}\label{main}
	Let $X$ be a discrete vector field with a singularity $S$. It follows that a neighborhood $U$ of $S$ exists, where when moving in a consistent direction along a horizontal or vertical line within $U$ that does not intersect $S$, the field vector consistently rotates in the same direction. Consequently, let $G$ be the corresponding angle-based $\varepsilon$-grid digraph; then, in $G$, the edges situated on the same horizontal or vertical line have same direction. Hence, the singular patterns of the vector field can be preserved in the digraph during the transformation process.
\end{lemma}

\begin{proof}
	A number of singular patterns such as sources, sinks, circulating and spiraling vector fields can be modeled or approximated by a field generated by logarithmic spirals \cite{wong2009identifying}. The formula of each grid point can be discribed by
	$$\left\{\begin{matrix}
		x=a e^{\theta \cot\alpha } \cos\theta   \\
		y=\frac{a}{\rho} e^{\theta \cot\alpha } \sin\theta
	\end{matrix}\right.$$
	and the vector located on $(x,y)$ is
	$$\left\{\begin{matrix}
		P=a e^{\theta \cot\alpha } (\cot\alpha \cos\theta-\sin\theta )=x \cot\alpha-\rho y	\\
		Q=	\frac{a}{\rho} e^{\theta \cot\alpha }(\cot\alpha \sin\theta+\cos\theta)=y \cot\alpha +\frac{x}{\rho}
	\end{matrix}\right.$$
	where $\theta$ is the polar angle, $\alpha \in [0,\pi)$ is the angle between the radial line and the tangent of the spiral at $(r,\theta)$; parameter $a$ is non-zero, and a counterclockwise field is generated by a positive $ a $ while a clockwise field is generated by a negative $a$; parameter $\rho$ controls the width-to-height ratio.
	
	To prove this theorem, we only need to consider the case of a horizontal line, as the vertical case can be proven similarly. Without loss of generality, let us assume that the singularity is located at $O(0,0)$. Suppose $(x,y)$ and $(x+\varepsilon,y)$ are two points in $U$, which can be described in polar coordinates as follows:
	$$
	\begin{aligned}
		x &= a_1 e^{\theta_1 \cot\alpha_1} \cos\theta_1, \quad x+\varepsilon = b_1 e^{\theta_2 \cot\alpha_2} \cos\theta_2 \\
		y &= \frac{a_1}{\rho} e^{\theta_1 \cot\alpha_1} \sin\theta_1 = \frac{b_1}{\rho} e^{\theta_2 \cot\alpha_2} \sin\theta_2
	\end{aligned}
	$$
	Here, $a_1$, $b_1$, $\theta_1$, $\theta_2$, $\alpha_1$, $\alpha_2$ and $\rho$ are constants such that $y\ne 0$, with $a_1$ and $b_1$ both positive or negative. By rewriting the equation, we have:
	$$
	\begin{aligned}
		b_1 e^{\theta_2 \cot\alpha_2} &= b_1 e^{\theta_2 (\cot\alpha_2 - \cot\alpha_1 + \cot\alpha_1)} \\
		&= (b_1 e^{\theta_2 (\cot\alpha_2 - \cot\alpha_1)}) e^{\theta_2 \cot\alpha_1}
	\end{aligned}
	$$
	Let $a_2 = b_1 e^{\theta_2 (\cot\alpha_2 - \cot\alpha_1)}$ and $\alpha = \alpha_1$. Then, we can express $x$, $x+\varepsilon$, $y$ as:
	$$
	\begin{aligned}
		x &= a_1 e^{\theta_1 \cot\alpha} \cos\theta_1, \quad x+\varepsilon = a_2 e^{\theta_2 \cot\alpha} \cos\theta_2 \\
		y &= \frac{a_1}{\rho} e^{\theta_1 \cot\alpha} \sin\theta_1 = \frac{a_2}{\rho} e^{\theta_2 \cot\alpha} \sin\theta_2
	\end{aligned}
	$$
	Note that $a_2$ and $\alpha$ are also constants. Therefore, $\varepsilon$ can be written as:
	$$
	\begin{aligned}
		\varepsilon &= a_2 e^{\theta_2 \cot\alpha} \cos\theta_2 - a_1 e^{\theta_1 \cot\alpha} \cos\theta_1 \\
		&= a_2 e^{\theta_2 \cot\alpha} \cos\theta_2 - a_2 e^{\theta_2 \cot\alpha} \frac{\sin\theta_2}{\sin\theta_1}\cos\theta_1 \\
		&= a_2 e^{\theta_2 \cot\alpha} \frac{\sin(\theta_1-\theta_2)}{\sin\theta_1}
	\end{aligned}
	$$
	Now, let's define:
	$$
	\begin{aligned}
		X_1 &= (P(x,y),Q(x,y),0) \\
		&= (x \cot\alpha - \rho y, y \cot\alpha + \frac{x}{\rho}, 0)
	\end{aligned}
	$$
	$$
	\begin{aligned}
		X_2 &= (P(x+\varepsilon,y),Q(x+\varepsilon,y),0) \\
		&= ((x+\varepsilon) \cot\alpha - \rho y, y \cot\alpha + \frac{x+\varepsilon}{\rho}, 0)
	\end{aligned}
	$$
	Thus, the third component of $X_2 \times X_1$ is given by:
	$$\begin{aligned}  
		z=&((x+\varepsilon) \cot\alpha-\rho y)(y \cot\alpha +\frac{x}{\rho})\\  
		&-(x \cot\alpha-\rho y)(y \cot\alpha +\frac{x+\varepsilon}{\rho}) \\  
		=&\varepsilon y(1+\cot^2 \alpha) \\  
		=& a_2 y(1+\cot^2 \alpha) e^{\theta_2 \cot\alpha }\frac{\sin(\theta_1-\theta_2)}{\sin\theta_1}  
	\end{aligned}$$  
	If $y>0$, then $\theta_1$ and $\theta_2$ lie in the interval $(0,\pi)$, and $\sin(\theta_1-\theta_2)>0$ and $\sin\theta_1>0$ (see left part in Fig. \ref{fig:showline}), thus $z$ has the same sign as $a_2$. If $y<0$, then $\theta_1$ and $\theta_2$ lie in the interval $(\pi,2\pi)$, and $\sin(\theta_1-\theta_2)<0$ and $\sin\theta_1<0$ (see right part in Fig. \ref{fig:showline}), thus $z$ has the opposite sign of $a_2$. Since the choice of $x$, $y$, and $\varepsilon$ is arbitrary, the sign of $z$ only depends on the sign of $y$ and $a_2$, leading to our conclusion.
\end{proof}

\begin{figure}[!t]
	\centering
	\includegraphics[width=1\linewidth]{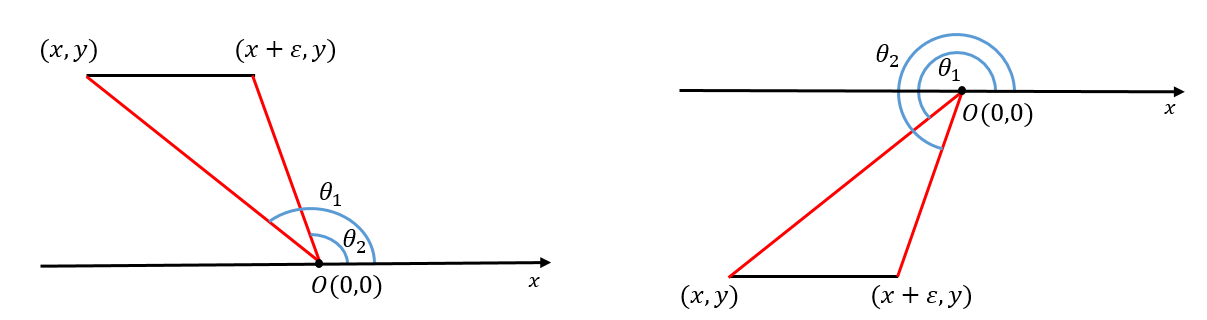}
	\caption{(left) Situation of $y>0$; (right) Situation of $y<0$.}
	\label{fig:showline}
\end{figure}

Now we prove the Property 1 and Property 2.
\subsubsection{Proof of Property 1}
\begin{proof}
We can establish this property by considering the weights of the edges in a grid digraph, which correspond to the rotation angles between the vectors at their respective endpoints. Given that all edges in a grid digraph are of the same length, by Lemma \ref{main}, as we approach the singularity, the field vector on an edge undergoes a faster rotation. Consequently, the rotation angle between the vectors at both ends of an edge increases as the edge gets closer to the singularity. Therefore, the weight of an edge is greater when it is closer to the singularity. Hence, the edge with the locally greatest weight must be part of the smallest square containing the singularity.
\end{proof}

\subsubsection{Proof of Property 2}
\begin{proof}
Before adding the edge with the greatest weight, according to Lemma \ref{main}, the representation containing the singularity must be either a \begin{tikzcd} {} \arrow[d] & {} \arrow[l] & {} \arrow[l] \\ {} \arrow[r] & {} \arrow[r] & {} \arrow[u] \end{tikzcd} 
or a \begin{tikzcd} {} \arrow[d] & {} \arrow[l] \\ {} \arrow[d] & {} \arrow[u] \\ {} \arrow[r] & {} \arrow[u] \end{tikzcd} 
configuration. Therefore, the edge with the greatest weight that is added last must be positioned in the middle of this hexagon. Regardless of the direction of the added edge, a new representation of the minimum generator of $H_1(G)$ emerges. Specifically, it is the smallest square that contains the singularity, taking the shape of \begin{tikzcd} {} \arrow[d] & {} \arrow[l] \\ {} \arrow[r] & {} \arrow[u] \end{tikzcd}.
\end{proof}

\section{Some tables of supplementary data}
\begin{table*}[!t]
	\centering
	\caption{ Positioning Results of Magnetic Dip Poles and Errors }
	\label{tab:IGRF}
	\resizebox{0.75\textwidth}{!}{%
		\begin{tabular}{|c|c|c|c|c|c|c|}
			\hline 
			\multicolumn{7}{|c|}{ North dip pole } \\
			\hline 
			Year & \begin{tabular}{c} 
				Positioning \\
				longitude
			\end{tabular} & \begin{tabular}{c} 
				Positioning \\
				latitude
			\end{tabular} & \begin{tabular}{c} 
				Actual \\
				longitude
			\end{tabular} & \begin{tabular}{c} 
				Actual \\
				latitude
			\end{tabular} & \begin{tabular}{c} 
				Errors of \\
				longitude
			\end{tabular} & \begin{tabular}{c} 
				Errors of \\
				latitude
			\end{tabular} \\
			\hline 2000 & 109.60$^{\circ}$W & 80.91$^{\circ}$N & 109.64$^{\circ}$W & 80.97$^{\circ}$N & 0.04$^{\circ}$ & 0.06$^{\circ}$ \\
			\hline 2005 & 118.13$^{\circ}$W & 83.13$^{\circ}$N & 118.22$^{\circ}$W & 83.19$^{\circ}$N & 0.09$^{\circ}$ & 0.06$^{\circ}$ \\
			\hline 2010 & 132.88$^{\circ}$W & 85.12$^{\circ}$N & 132.84$^{\circ}$W & 85.02$^{\circ}$N & 0.04$^{\circ}$ & 0.1$^{\circ}$ \\
			\hline 2015 & 160.37$^{\circ}$W & 86.37$^{\circ}$N & 160.34$^{\circ}$W & 86.31$^{\circ}$N & 0.03$^{\circ}$ & 0.06$^{\circ}$ \\
			\hline 2020 & 162.87$^{\circ}$E & 86.39$^{\circ}$N & 162.87$^{\circ}$E & 86.49$^{\circ}$N & 0$^{\circ}$ & 0.1$^{\circ}$ \\
			\hline 
			\multicolumn{7}{|c|}{ South dip pole } \\
			\hline 
			Year &\begin{tabular}{c} 
				Positioning \\
				longitude
			\end{tabular} & \begin{tabular}{c} 
				Positioning \\
				latitude
			\end{tabular} & \begin{tabular}{c} 
				Actual \\
				longitude
			\end{tabular} & \begin{tabular}{c} 
				Actual \\
				latitude
			\end{tabular} & \begin{tabular}{c} 
				Errors of \\
				longitude
			\end{tabular} & \begin{tabular}{c} 
				Errors of \\
				latitude
			\end{tabular} \\
			\hline 2000 & 138.36$^{\circ}$E & 64.64$^{\circ}$S & 138.30$^{\circ}$E & 64.66$^{\circ}$S & 0.06$^{\circ}$ & 0.02$^{\circ}$ \\
			\hline 2005 & 137.88$^{\circ}$E & 64.60$^{\circ}$S & 137.85$^{\circ}$E & 64.55$^{\circ}$S & 0.03$^{\circ}$ & 0.05$^{\circ}$ \\
			\hline 2010 & 137.36$^{\circ}$E & 64.39$^{\circ}$S & 137.32$^{\circ}$E & 64.43$^{\circ}$S & 0.04$^{\circ}$ & 0.04$^{\circ}$ \\
			\hline 2015 & 136.63$^{\circ}$E & 64.34$^{\circ}$S & 136.60$^{\circ}$E & 64.28$^{\circ}$S & 0.03$^{\circ}$ & 0.06$^{\circ}$ \\
			\hline 2020 & 135.88$^{\circ}$E & 64.12$^{\circ}$S & 135.87$^{\circ}$E & 64.08$^{\circ}$S & 0.01$^{\circ}$ & 0.04$^{\circ}$ \\
			\hline
		\end{tabular}
	}
\end{table*}

\begin{table*}[!t]
	\centering
	\caption{Positioning Results and Errors of the Center Tracking of Typhoon Saola}
        \label{tab:SL}
	\resizebox{0.8\textwidth}{!}{%
		\begin{tabular}{|c|c|c|c|c|c|c|}
			\hline \text { UTC } & $\begin{array}{c}
				\text { Actual } \\
				\text { longitude(E) }
			\end{array}$ & $\begin{array}{c}
				\text { Actual } \\
				\text { latitude(N) }
			\end{array}$ & $\begin{array}{c}
				\text { Positioning } \\
				\text { longitude(E) }
			\end{array}$ & $\begin{array}{c}
				\text { Positioning } \\
				\text { latitude(N) }
			\end{array}$ & $\begin{array}{c}
				\text { Errors of } \\
				\text { longitude }
			\end{array}$ & $\begin{array}{c}
				\text { Errors of } \\
				\text { latitude }
			\end{array}$ \\
			\hline 2023/08/25/00:00 & 124   & 19.7  & 124.26 & 19.99 & 0.26  & 0.29 \\
			\hline 2023/08/25/06:00 & 123.9 & 19.7  & 124.25 & 19.52 & 0.35  & 0.18 \\
			\hline 2023/08/25/12:00 & 123.6 & 19.5  & 123.96 & 19.26 & 0.36  & 0.24 \\
			\hline 2023/08/25/18:00 & 123.4 & 19.3  & 123.53 & 19.22 & 0.13  & 0.08 \\
			\hline 2023/08/26/00:00 & 123.2 & 18.5  & 123.26 & 18.71  & 0.06  & 0.21 \\
			\hline 2023/08/26/06:00 & 123.2 & 18    & 123.28 & 18    & 0.08  & 0 \\
			\hline 2023/08/26/12:00 & 123.1 & 17.6  & 123.02 & 17.52 & 0.08  & 0.08 \\
			\hline 2023/08/26/18:00 & 122.9 & 17.2  & 123.04 & 17.24 & 0.14  & 0.04 \\
			\hline 2023/08/27/00:00 & 122.9 & 16.8  & 123   & 16.77 & 0.1   & 0.03 \\
			\hline 2023/08/27/06:00 & 123   & 16.5  & 122.97 & 16.49 & 0.03  & 0.01 \\
			\hline 2023/08/27/12:00 & 123.2 & 16.3  & 123.28 & 16.23 & 0.08  & 0.07 \\
			\hline 2023/08/27/18:00 & 123.8 & 16    & 123.74 & 16.21 & 0.06  & 0.21 \\
			\hline 2023/08/28/00:00 & 124.3 & 16.8  & 124.27 & 16.51 & 0.03  & 0.29 \\
			\hline 2023/08/28/06:00 & 124.3 & 17.5  & 124.26 & 17.48 & 0.04  & 0.02 \\
			\hline 2023/08/28/12:00 & 124.1 & 17.8  & 124.25 & 17.75 & 0.15  & 0.05 \\
			\hline 2023/08/28/18:00 & 123.9 & 18.2  & 124   & 18.22 & 0.1   & 0.02 \\
			\hline 2023/08/29/00:00 & 123.5 & 18.6  & 123.51 & 18.51 & 0.01  & 0.09 \\
			\hline 2023/08/29/06:00 & 123.1 & 18.9  & 123   & 18.76 & 0.1   & 0.14 \\
			\hline 2023/08/29/12:00 & 122.7 & 19.3  & 122.51 & 19.29 & 0.19  & 0.01 \\
			\hline 2023/08/29/18:00 & 121.9 & 19.9  & 121.96 & 19.73 & 0.06  & 0.17 \\
			\hline 2023/08/30/00:00 & 121.1 & 20.1  & 121.23 & 20    & 0.13  & 0.1 \\
			\hline 2023/08/30/06:00 & 120.4 & 20.4  & 120.46 & 20.49 & 0.06  & 0.09 \\
			\hline 2023/08/30/12:00 & 119.7 & 20.7  & 119.51 & 20.71 & 0.19  & 0.01 \\
			\hline 2023/08/30/18:00 & 118.7 & 20.9  & 118.99 & 20.96 & 0.29  & 0.06 \\
			\hline 2023/08/31/00:00 & 118.2 & 21    & 118.22 & 21    & 0.02  & 0 \\
			\hline 2023/08/31/06:00 & 117.8 & 21.2  & 117.74 & 21.26 & 0.06  & 0.06 \\
			\hline 2023/08/31/12:00 & 117.3 & 21.5  & 117.25 & 21.47 & 0.05  & 0.03 \\
			\hline 2023/08/31/18:00 & 116.8 & 21.7 & 116.99 & 21.54 & 0.19 &  0.16 \\
			\hline 2023/09/01/00:00 & 116.3 & 21.9  & 116.22 & 21.97 & 0.08  & 0.07 \\
			\hline 2023/09/01/06:00 & 115.5 & 22    & 115.5 & 21.78 & 0     & 0.22 \\
			\hline 2023/09/01/12:00 & 114.5 & 22    & 114.52 & 22.01 & 0.02  & 0.01 \\
			\hline 2023/09/01/18:00 & 113.5 & 22    & 113.49 & 21.77 & 0.01  & 0.23 \\
			\hline 	
		\end{tabular}
	}
\end{table*}

\begin{table*}[!t]
	\centering
	\caption{Positioning Results and Errors of the Center Tracking of Typhoon Khanun}
	\label{tab:KN}
	\resizebox{0.8\textwidth}{!}{%
		\begin{tabular}{|c|c|c|c|c|c|c|}
			\hline \text { UTC } & $\begin{array}{c}
				\text { Actual } \\
				\text { longitude(E) }
			\end{array}$ & $\begin{array}{c}
				\text { Actual } \\
				\text { latitude(N) }
			\end{array}$ & $\begin{array}{c}
				\text { Positioning } \\
				\text { longitude(E) }
			\end{array}$ & $\begin{array}{c}
				\text { Positioning } \\
				\text { latitude(N) }
			\end{array}$ & $\begin{array}{c}
				\text { Errors of } \\
				\text { longitude }
			\end{array}$ & $\begin{array}{c}
				\text { Errors of } \\
				\text { latitude }
			\end{array}$ \\
			\hline 2023/07/31/00:00 & 132   & 22.1  & 131.99 & 22.22 & 0.01  & 0.12 \\
			\hline 2023/07/31/06:00 & 131.5 & 22.8  & 131.52 & 23    & 0.02  & 0.2 \\
			\hline 2023/07/31/12:00 & 131.1 & 23.4  & 130.99 & 23.29 & 0.11  & 0.11 \\
			\hline 2023/07/31/18:00 & 130.3 & 24    & 130.03 & 24.25 & 0.27  & 0.25 \\
			\hline 2023/08/01/00:00 & 129.4 & 24.6  & 129.27 & 24.51 & 0.13  & 0.09 \\
			\hline 2023/08/01/06:00 & 128.7 & 25    & 128.52 & 24.97 & 0.18  & 0.03 \\
			\hline 2023/08/01/12:00 & 128   & 25.3  & 127.77 & 25.22 & 0.23  & 0.08 \\
			\hline 2023/08/01/18:00 & 127.4 & 25.6  & 127.26 & 25.5  & 0.14  & 0.1 \\
			\hline 2023/08/02/00:00 & 126.8 & 25.7  & 126.78 & 25.75 & 0.02  & 0.05 \\
			\hline 2023/08/02/06:00 & 126.1 & 26    & 126   & 26.03 & 0.1   & 0.03 \\
			\hline 2023/08/02/12:00 & 125.6 & 26.2  & 125.52 & 26.02 & 0.08  & 0.18 \\
			\hline 2023/08/02/18:00 & 125.1 & 26.4  & 125.03 & 26.28 & 0.07  & 0.12 \\
			\hline 2023/08/03/00:00 & 124.8 & 26.7  & 124.77 & 26.53 & 0.03  & 0.17 \\
			\hline 2023/08/03/06:00 & 124.3 & 26.7  & 124.3 & 26.74 & 0     & 0.04 \\
			\hline 2023/08/03/12:00 & 124.1 & 26.9  & 124.26 & 26.72 & 0.16  & 0.18 \\
			\hline 2023/08/03/18:00 & 124.2 & 26.8  & 124.26 & 26.78 & 0.06  & 0.02 \\
			\hline 2023/08/04/00:00 & 124.5 & 27    & 124.72 & 27.02 & 0.22  & 0.02 \\
			\hline 2023/08/04/06:00 & 125.2 & 27.2  & 125.21 & 27.25 & 0.01  & 0.05 \\
			\hline 2023/08/04/12:00 & 125.7 & 27.5  & 125.73 & 27.49 & 0.03  & 0.01 \\
			\hline 2023/08/04/18:00 & 126.3 & 27.6  & 126.23 & 27.55 & 0.07  & 0.05 \\
			\hline 2023/08/05/00:00 & 126.9 & 27.7  & 126.75 & 27.77 & 0.15  & 0.07 \\
			\hline 2023/08/05/06:00 & 127.6 & 27.8  & 127.75 & 27.97 & 0.15  & 0.17 \\
			\hline 2023/08/05/12:00 & 128.4 & 27.9  & 128.47 & 27.78 & 0.07  & 0.12 \\
			\hline 2023/08/05/18:00 & 129   & 27.9  & 128.96 & 28.22 & 0.04  & 0.32 \\
			\hline 2023/08/06/00:00 & 129.4 & 27.8  & 129.5 & 28.2  & 0.1   & 0.4 \\
			\hline 2023/08/06/06:00 & 130   & 27.7  & 129.98 & 27.51 & 0.02  & 0.19 \\
			\hline 2023/08/06/12:00 & 130.5 & 27.5  & 130.3 & 27.46 & 0.2   & 0.04 \\
			\hline 2023/08/06/18:00 & 130.8 & 27.7  & 130.77 & 27.72 & 0.03  & 0.02 \\
			\hline 2023/08/07/00:00 & 131   & 27.8  & 131.25 & 27.75 & 0.25  & 0.05 \\
			\hline 2023/08/07/06:00 & 131.1 & 27.9  & 131.26 & 27.98 & 0.16  & 0.08 \\
			\hline 2023/08/07/12:00 & 131.3 & 28.2  & 131   & 28.24 & 0.3   & 0.04 \\
			\hline 2023/08/07/18:00 & 131.1 & 28.4  & 130.75 & 28.54 & 0.35  & 0.14 \\
			\hline 2023/08/08/00:00 & 130.9 & 29    & 131.04 & 29.24 & 0.14  & 0.24 \\
			\hline 2023/08/08/06:00 & 130.8 & 29.3  & 130.78 & 29.53 & 0.02  & 0.23 \\
			\hline 2023/08/08/12:00 & 130.4 & 29.9  & 130   & 29.78 & 0.4   & 0.12 \\
			\hline 2023/08/08/18:00 & 129.7 & 30.4  & 129.29 & 30.53 & 0.41  & 0.13 \\
			\hline 	
		\end{tabular}
	}
\end{table*}

\end{appendices}

\end{document}